\newtheorem{theorem}{Theorem}[section]
\newtheorem{lemma}[theorem]{Lemma}
\newtheorem{proposition}[theorem]{Proposition}
\newtheorem{corollary}[theorem]{Corollary}
\theoremstyle{definition}
\theoremstyle{remark}
\newtheorem{remark}[theorem]{Remark}
\def\Fq{{\mathbb F}_q}
\def\a{{\alpha}}
\def\d{{\delta}}
\def\AA{{\mathbb A}}
\def\PP{{\mathbb P}}
\newcommand{\flm}{\EuScript{F}(\ell,m)}
\newcommand{\flmh}{\EuScript{F}(\ell,m;h)}
\newcommand{\C}{C^\mathbb{A}(\ell,m)}
\newcommand{\Ch}{C^\mathbb{A}(\ell,m;h)}
\newcommand{\Cl}{C^\mathbb{A}(\ell,m;\ell)}
\newcommand{\Ev}{\operatorname{Ev}}
\newcommand{\codim}{\operatorname{codim}}
\newcommand{\RM}{\mathrm{RM}}
\newcommand{\PRM}{\mathrm{PRM}}
\newcommand{\GL}{\mathrm{GL}}
\newcommand{\supp}{\mathrm{Supp}}
\def\A{{\mathbb A}}
\def\PP{{\mathbb P}}
\def\M{{\mathcal M}}
\def\N{{\mathcal N}}
\newcommand{\Lh}{\mathcal{L}_{h}}
\newcommand{\w}{{\mathrm{w_H}}}
\begin{document}

\title[Higher Weights of Affine Grassmann Codes and Their Duals]{Higher Weights of Affine Grassmann Codes and Their Duals\footnote{To appear in the Proceedings of AGCT-2013 (Luminy, France) in:
``Algorithmic Arithmetic, Geometry and Coding Theory'', \emph{Contemporary Mathematics} Vol. 637, Amer. Math. Soc.,  2015.}} 


\author{Mrinmoy Datta}
\address{Department of Mathematics,
Indian Institute of Technology Bombay,\newline \indent
Powai, Mumbai 400076, India.}
\email{mrinmoy.dat@gmail.com}
\thanks{The first named author is partially supported by a doctoral fellowship from the National Board for Higher Mathematics, a division of the Department of Atomic Energy, Govt. of India.}

\author{Sudhir R. Ghorpade}
\address{Department of Mathematics, 
Indian Institute of Technology Bombay,\newline \indent
Powai, Mumbai 400076, India.}
\email{srg@math.iitb.ac.in}
\thanks{The second named author is partially supported by Indo-Russian project INT/RFBR/P-114 from the Department of Science \& Technology, Govt. of India and  IRCC Award grant 12IRAWD009 from IIT Bombay.}

\subjclass[2010]{Primary 15A03, 11T06 05E99 Secondary 11T71}

\date{}

\begin{abstract}
We consider the question of determining the higher weights or the generalized Hamming weights 
of affine Grassmann codes and their duals. Several initial as well as terminal higher weights of affine Grassmann codes of an arbitrary level are determined explicitly. In the case of duals of these codes, we give a formula for many initial as well as terminal higher weights. As a special case, we obtain an alternative simpler proof of the formula of Beelen et al for the minimum distance of the dual of an affine Grasmann code. 
\end{abstract}

\maketitle


\section{Introduction}
\label{sec:in}
A $q$-ary linear code of length $n$ and dimension $k$, or in short, a $[n,k]_q$-code, is simply a $k$-dimensional subspace of the $n$-dimensional vector space $\Fq^n$ over the finite field $\Fq$ with $q$ elements. A basic example is that of a (generalized) Reed-Muller code $\RM (\nu, \delta)$ of order $\nu$ and length $n:=q^{\delta}$,
given by the image of the evaluation map 
$$
\Ev: \Fq[X_1, \dots X_{\delta}]_{\le \nu} \to \Fq^n \quad \text{ defined by } \quad 
\Ev (f) = \left( f(P_1), \dots , f(P_n)\right),
$$
where $\Fq[X_1, \dots X_{\delta}]_{\le \nu} $ denotes the space of polynomials in $\delta$ variables of (total) degree $\le \nu$ with coefficients in $\Fq$ and $P_1, \dots , P_n$ is 
an ordered listing 
of the points of the affine space $\A^{\delta}(\Fq) = \Fq^{\delta}$. A useful variant of this is the projective Reed-Muller code $\PRM (\nu, \delta)$ of order $\nu$ and length $n:=(q^{\delta+1}-1)/(q-1)$, which is obtained by evaluating homogeneous polynomials 
in $\delta+1$ variables of  degree $\nu$ with coefficients in $\Fq$ at points of the projective space $\PP^{\delta} = \PP^{\delta}(\Fq)$ or rather at suitably normalized representatives in $\Fq^{\delta+1}$ 
of an ordered listing of the points of $\PP^{\delta}$. 

{F}rom a geometric viewpoint, projective Reed-Muller codes $\PRM (\nu, \delta)$ correspond (at least when $\nu <q$) to the Veronese variety given by the image of $\PP^{\delta}$ in $\PP^{k-1}$ under the Veronese map of degree $\nu$, where $k:= {{\nu + \delta}\choose{\nu} }$. In this set-up, $\RM (\nu, \delta)$ corresponds to the image of this Veronese map when restricted to an $\A^{\delta}$ inside $\PP^{\delta}$ (for instance, the set of points $(x_0:x_1:\cdots : x_{\delta})$ of $\PP^{\delta}$ with $x_0=1$). 

Reed-Muller codes are classical objects and in the generalized setting above, their study goes back at least to  Kasami, Lin, and Peterson \cite{KLP} as well as Delsarte,  Goethals, and MacWilliams \cite{DGM}. One may refer to \cite[Prop. 4]{BGH2} for a summary of several of the basic properties of $\RM (\nu, \delta)$. Projective Reed-Muller codes appeared explicitly in the work of Lachaud \cite{L1, L2} and S{\o}rensen \cite{So}. Around the same time, a new class of codes called Grassmann codes were studied by Ryan \cite{R1, R2}, and later by Nogin \cite{N} and several others (see, e.g., \cite{GL,GPP,HJR,GK}).  These correspond geometrically to the Grassmann variety $G_{\ell,m}$ formed by the $\ell$-dimensional subspaces of $\Fq^m$ together with the Pl\"ucker embedding $G_{\ell,m} \hookrightarrow \PP^{k-1}$, where $k= {{m}\choose{\ell}}$. In effect,  the Grassmann code $C(\ell,m)$ is a 
linear code whose generator matrix has  as its columns certain fixed 
representatives in $\Fq^k$ of the Pl\"ucker coordinates of all $\Fq$-rational points of $G_{\ell,m}$. Affine Grassmann codes were introduced in \cite{BGH1} and further studied in \cite{BGH2} and \cite{GK}. Given positive integers $\ell, \ell'$ with $\ell \le \ell'$, upon letting 
$m= \ell + \ell'$ and  $\delta = \ell \ell'$,  the affine Grassmann code $C^{\A}(l,m)$ is defined, like a Reed-Muller code, as the $q$-ary linear code 
 of length $n= q^{\delta}$ given by the image of the evaluation map 
\begin{equation}
\label{EvonFlm}
\Ev: \flm \to \Fq^n \quad \text{ defined by } \quad 
\Ev (f) = \left( f(P_1), \dots , f(P_n)\right),
\end{equation}
where $\flm$ is the space of linear polynomials in the minors of a generic $ \ell \times \ell'$ matrix $X$ and $P_1, \dots, P_n$ is an ordered listing of the $\delta$-dimensional affine space of all $ \ell \times \ell'$ matrices with entries in $\Fq$.  The relationship between affine Grassmann codes $C^{\A}(l,m)$ and Grassmann codes $C(l,m)$ is akin to that between Reed-Muller codes $\RM(\nu,\delta)$ and projective Reed-Muller codes $\PRM(\nu,\delta)$. 

The notion of higher weight, also known as generalized Hamming weight, of a linear code is a natural and useful generalization of the basic notion of minimum distance (cf. \cite{W}).  If $C$ is a $[n,k]_q$-code, then for $r=0,1, \dots ,k$, the $r^{{\rm th}}$ \emph{higher weight} of  $C$ is
defined by  
$$
d_r = d_r (C) = \min \{ \w( D) : D \mbox{ is a subspace of $C$ with } 
\dim D = r\},
$$
where $\w (D)$ denotes the support weight 
of $D$ [see Section \ref{sec2} below for a definition]. 
Clearly, $d_1(C)$ is the minimum distance $d(C)$ of $C$. It is well-known and easy to see that $0=d_0<d_1<\cdots < d_k$ and moreover $d_k=n$ provided $C$ is nondegenerate. It is, in general, an interesting and difficult question to determine the weight hierarchy, i.e., all the higher weights, of a given class of codes. For example, in a significant piece of work, Heijnen and Pelikaan \cite{HP} completely determined the higher weights of Reed-Muller codes $\RM(\nu,\delta)$. 
In the case of projective Reed-Muller codes,  the minimum distance was determined by Lachaud \cite{L2} and independently by S{\o}rrensen \cite{So}. In fact, Lachaud derives it as a consequence of an affirmative answer given by Serre \cite{Se} to a question of Tsfasman concerning the maximum number of $\Fq$-rational points on a projective hypersurface of a given degree. The second higher weight  was determined by Boguslavsky \cite{Bog}, while 
the determination of $d_r\left(\PRM(\nu,\delta)\right)$ is still open for $r>2$. In the case of Grassmann codes, the  $r^{{\rm th}}$ higher weight is known for the first few and the last few values of $r$, thanks to Nogin \cite{N} (see also \cite{GL}) and Hansen, Johnsen and Ranestad \cite{HJR} (see also \cite{GPP}). More precisely, for $r=0,1, \dots \mu$, where $\mu:=1+\max\{\ell,m-\ell\}$, we have
$$
d_r\left( C(\ell ,m)\right) = q^{\d} + q^{\d - 1} +\dots +   q^{\d - r +1}  \quad \text{and} \quad 
d_{k-r}\left( C(\ell ,m)\right) = n -( 1+ q + \cdots + q^{r -1}), 
$$
where $n$ denotes the length of $C(\ell,m)$ or in other words, the number of $\Fq$-rational points of $G_{\ell,m}$, and it is given by the Gaussian binomial coefficient ${{m}\brack{\ell}}_q $. 
In case $\ell =2$, we know a little more (cf. \cite{GPP}), but the general case is still open. 

We consider in this paper the problem of determining the higher weights of affine Grassmann codes and their duals. Our main result is 
an explicit formula for $d_r\left(C^{\A}(l,m)\right)$ for the first few and the last few values of $r$, or more precisely, for $0\le r \le \mu'$ 
and for $k- \mu  \le r \le k$, where
\begin{equation}
\label{mumu'}
\mu' = 1 + \max\{\ell, \, \ell'-\ell \} \quad \text{ and } \quad \mu := 1 + \max\{\ell, \, \ell'\} = \ell'+1.
\end{equation}
In the case of the result for the first $\mu'$ higher weights, we have to make an additional mild assumption that $\ell < \ell'$. The result for the last $\mu$ higher weights can be deduced from the corresponding results for Grassmann codes using a geometric approach. However, we give here self-contained proofs in the spirit of \cite{BGH1,BGH2} and this has the advantage that analogous results are also obtained for affine Grassmann codes of arbitrary level introduced in \cite{BGH2}. As for the duals, we can in fact go much farther, and determine many more higher weights of the duals of affine Grassmann codes except that the result we give here is best described recursively. As a 
corollary, we obtain a new and simpler proof of \cite[Theorem 17]{BGH2}, which states that if $\ell'>1$, then  the minimum 
distance of $C^{\A}(l,m;h)^{\perp}$ is $3$ or $4$ according as $q>2$ or $q=2$. 
The geometric approach and an alternative proof of the result about the last $\mu$ higher weights is also outlined in an appendix for the convenience of the reader. 
%

\section{Initial Higher Weights}
\label{sec2}

{For} any $q$-ary linear code $C$ of length $n$, and any $D\subseteq C$, we let 
$$
\supp (D):=\left\{i\in\{1, \dots , n\} : c_i\ne 0 \text{ for some } c\in D\right\} \quad \text{and} \quad \w (D) = |\supp (D)|
$$
denote, respectively, the \emph{support} and the \emph{support weight} of $D$. For a codeword 
$c = (c_1, \dots , c_n)\in C$, we write $\w (c) = \w(\{c\})$ and note that this is simply the \emph{Hamming weight} of $c$.  
 
Fix, throughout this paper, positive integers $h, \ell, \ell'$ with $h \le \ell \le \ell'$ and an $l \times l'$ matrix $X= \left(X_{ij}\right)$ whose entries are algebraically independent indeterminates over $\Fq$. 
Let $\Fq[X]$ denote the ring of polynomials in the $\ell\ell'$ variables $X_{ij}$'s with coefficients in $\Fq$. As in \cite{BGH2},  
we let $\Delta(\ell,m;h)$  denote the  set of all minors of $X$ of degree $\le h$. Note that 
$\Delta(\ell,m;h)$  is a subset of $\Fq[X]$ that contains the constant polynomial $1$, which corresponds to the $0\times 0$ minor of $X$. Further let 
$$
\flmh := \mbox{the $\Fq$-linear  subspace of $\Fq[X]$ generated by } \Delta(\ell,m;h).
$$
Note that the space $\flm$ defined in the Introduction contains $\flmh$ and the equality holds when $h=\ell$.  
The \emph{affine Grassmann code of level $h$}, denoted $\Ch$,  is defined to be the image of $\flmh$ under the evaluation map $\Ev$ given by \eqref{EvonFlm}. Evidently, $\Cl =\C$ and $\Ch$ is a subcode of 
$\RM(h,\delta)$. 
%
%
Now here is a slightly refined version of a basic result proved in \cite{BGH2}.
%

\begin{proposition}
\label{prop:mindist}
The minimum distance $d(\ell,m;h)$ of $\Ch$ is
\begin{equation}\label{eq:dlmr2}
d(\ell,m;h)  = q^{\delta}\prod_{i=1}^{h}\left(1 - \frac{1}{q^i}\right) = q^{\delta - h^2} \left|\GL_{h}(\Fq)\right| .
\end{equation}
Moreover, if $\M$ is any $h\times h$ minor of $X$, then 
$\w\left(\Ev\left(\M\right)\right) = d(\ell,m;h)$. 
\end{proposition}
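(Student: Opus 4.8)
The plan is to split the statement into three essentially independent pieces: the algebraic identity between the two closed forms, the determination of the support weight of an individual minor, and the minimum-distance bound itself. Only the last is substantial; the first two are elementary. For the identity $q^{\delta}\prod_{i=1}^{h}(1-1/q^i) = q^{\delta-h^2}|\GL_h(\Fq)|$, I would start from $|\GL_h(\Fq)| = \prod_{i=0}^{h-1}(q^h - q^i)$ and pull a factor $q^h$ out of each of the $h$ terms, obtaining $|\GL_h(\Fq)| = q^{h^2}\prod_{i=0}^{h-1}(1 - q^{i-h})$; reindexing by $j = h-i$ turns the product into $\prod_{j=1}^{h}(1 - q^{-j})$, and multiplying through by $q^{\delta-h^2}$ yields the claim. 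This is a routine manipulation.

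For the \emph{moreover} part I would compute $\w(\Ev(\M))$ directly for $\M$ the leading principal $h\times h$ minor $\det(X_{ij})_{1\le i,j\le h}$, which is legitimate since $h\le\ell\le\ell'$. For an $\ell\times\ell'$ matrix $P$ over $\Fq$ one has $\M(P)\neq 0$ exactly when the top-left $h\times h$ block of $P$ is invertible; there are $|\GL_h(\Fq)|$ such blocks and $q^{\delta-h^2}$ free choices for the remaining $\delta - h^2$ entries, so $\w(\Ev(\M)) = q^{\delta-h^2}|\GL_h(\Fq)|$. For an arbitrary $h\times h$ minor, specified by a choice of $h$ rows and $h$ columns, I would invoke the symmetry of the construction: permuting the rows and columns of $X$ induces a bijection of the evaluation points (the $\ell\times\ell'$ matrices) that preserves $\flmh$ and carries the given minor to the leading one, hence preserves support weight. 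Thus every $h\times h$ minor has support weight $q^{\delta-h^2}|\GL_h(\Fq)|$. Since this value is nonzero, $\Ev(\M)\neq 0$, so each such minor is a genuine nonzero codeword and already gives the upper bound $d(\ell,m;h)\le q^{\delta-h^2}|\GL_h(\Fq)|$.

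The main obstacle is the matching lower bound: every $f\in\flmh$ with $\Ev(f)\neq 0$ must satisfy $\w(\Ev(f)) \ge q^{\delta-h^2}|\GL_h(\Fq)|$, equivalently $f$ vanishes on at most $q^{\delta} - q^{\delta-h^2}|\GL_h(\Fq)|$ of the evaluation points. This is the genuine minimum-distance assertion, and it is exactly the content proved in \cite{BGH2}. The point is that it cannot be obtained from the inclusion $\Ch\subseteq\RM(h,\delta)$ together with a degree bound, since $\Ch$ has far larger minimum distance than $\RM(h,\delta)$; the argument must exploit the determinantal structure of $\flmh$ (for instance, by reducing a nonzero $f$ via row and column operations to a normal form and stratifying the affine space of matrices by rank, as in \cite{BGH1,BGH2}). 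I would take this lower bound as given from \cite{BGH2}. Combining it with the upper bound and the identity of the first two steps gives $d(\ell,m;h) = q^{\delta-h^2}|\GL_h(\Fq)| = q^{\delta}\prod_{i=1}^{h}(1-1/q^i)$, and the computation of the second step then shows that every $h\times h$ minor attains this minimum, which completes the proof.
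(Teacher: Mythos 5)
Your proposal is correct and follows essentially the same route as the paper: the substantive lower bound is delegated to \cite[Theorem 5]{BGH2}, the identity between the two closed forms is a routine manipulation, and the weight of an arbitrary $h\times h$ minor is reduced to that of the leading principal minor via the row/column permutation symmetry $X\mapsto PXQ^{-1}$ (which the paper justifies by citing the permutation automorphisms of $\Ch$ from \cite[\S IV]{BGH2}). The only cosmetic difference is that you compute the weight of the leading principal minor by direct counting, whereas the paper also takes that from \cite[Theorem 5]{BGH2}; your count is correct.
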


\begin{proof}
The first equality \eqref{eq:dlmr2} is proved in \cite[Theorem 5]{BGH2}, while the second is easily deduced. 
Also it is shown that in \cite[Theorem 5]{BGH2} if $\Lh = \det\left(X_{ij}\right)_{1\le i,j \le h}$ 
is the $h^{\rm th}$ leading principal minor of $X$, then 
$\w\left(\Ev\left(\Lh\right)\right) = d(\ell,m;h)$. Now if $\M$ is any $h\times h$ minor of $X$, then there are 
positive integers $p_1, \dots , p_h, \, q_1, \dots , q_h$
with $p_1<\cdots < p_h\le \ell$ and $q_1<\cdots < q_h\le \ell'$ such that $\M = \det (X_{p_iq_j})_{1\le i,j\le h}$.  Let $\sigma\in S_{\ell}$ be a permutation such that $\sigma(i)=p_i$ for $1\le i \le \ell$ and $P \in \GL_{\ell}(\Fq)$ be the permutation matrix corresponding to $\sigma$ so that for $1\le i,j \le \ell$,  the $(i,j)^{\rm th}$ entry of $P$ is $1$ is $j=\sigma (i)$ and $0$ otherwise. Likewise, let $\tau\in S_{\ell'}$ be such that $\tau(i) =q_i$ for $1\le i\le \ell'$ and $Q\in \GL_{\ell'}(\Fq)$ be the permutation matrix corresponding to $\tau$. Then 
it is easily seen that 
$\M$ is the the $h^{\rm th}$ leading principal minor of $PXQ^{-1}$.
Moreover, we know from \cite[\S IV]{BGH2} that $X\mapsto PXQ^{-1}$ induces a permutation  automorphism of $\Ch$. It follows that $\w\left(\Ev\left(\M\right)\right) = \w\left(\Ev\left(\Lh\right)\right) = d(\ell,m;h)$.  
\end{proof}

The following general observation about the support weights of linear codes will be useful in the sequel. 

\begin{lemma}
\label{union}
Let $C$ be an $[n,k]_q$-code and for $i=1, \dots , n$, let $\pi_i: C \to \Fq$ denote the $i^{\rm th}$ projection map defined by $\pi_i(c_1, \dots, c_n) = c_i$.
Also let $D$ be a subcode of $C$ 
and $\{y_1, \dots, y_r\}$ be a generating set of $D$. Then 
$$
\supp (D) = \bigcup_{j=1}^n A_j \quad \text{where for } 1\le j\le n, \quad 
A_j :=\left\{i\in\{1, \dots , n\} : \pi_i (y_j) \neq 0 \right\}.
$$
\end{lemma}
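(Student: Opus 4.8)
The plan is to establish the asserted set equality by proving the two inclusions separately, the essential input being the linearity of each projection map $\pi_i$ together with the hypothesis that $\{y_1, \dots, y_r\}$ generates $D$. Since $\supp(D)$ is defined coordinatewise and $A_j$ records exactly the nonzero coordinates of the single codeword $y_j$, the statement is really the assertion that a coordinate is ``hit'' by some codeword of $D$ precisely when it is hit by one of the chosen generators.

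First I would prove the easy inclusion $\bigcup_{j=1}^{r} A_j \subseteq \supp(D)$. If $i \in A_j$ for some $j$, then $\pi_i(y_j)\neq 0$ by the definition of $A_j$, and since $y_j$ is a codeword of $D$, the index $i$ belongs to $\supp(D)$ by the definition of the support. For the reverse inclusion $\supp(D) \subseteq \bigcup_{j=1}^{r} A_j$, suppose $i \in \supp(D)$, so that $\pi_i(c)\neq 0$ for some $c\in D$. Because $\{y_1, \dots, y_r\}$ generates $D$, I can write $c = \sum_{j=1}^{r}\lambda_j y_j$ with $\lambda_j\in\Fq$, and applying the $\Fq$-linear map $\pi_i$ yields $\pi_i(c) = \sum_{j=1}^{r}\lambda_j\,\pi_i(y_j)$. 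As the left-hand side is nonzero, at least one summand must be nonzero, which forces $\pi_i(y_j)\neq 0$ for some $j$; that is, $i\in A_j$, and hence $i\in\bigcup_{j=1}^{r} A_j$.

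I expect no genuine obstacle here, as the lemma is a direct unwinding of the definitions. The only points that deserve care are the use of linearity of $\pi_i$ in the second inclusion, and the observation that a mere generating set (rather than a basis) of $D$ suffices, since the argument never invokes linear independence of the $y_j$. I would also note in passing that the union should be taken over the $r$ generators $y_1,\dots,y_r$ of $D$, matching the indexing of the $A_j$.
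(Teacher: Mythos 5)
Your proof is correct and is essentially the paper's own argument: the easy inclusion is immediate from the definitions, and the reverse inclusion follows by expanding an arbitrary $c\in D$ in the generators and applying the linearity of $\pi_i$ (the paper phrases this contrapositively, which makes no difference). Your parenthetical remark that the union should run over $j=1,\dots,r$ rather than $j=1,\dots,n$ correctly identifies a typographical slip in the statement as printed.
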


\begin{proof}
Clearly, $\cup_{j=1}^n A_j \subseteq \supp(D)$. On the other hand, suppose $i\in \{1, \dots  , n\}$ is such that $i \notin \cup_{j=1}^n A_j$. Then  $\pi_i (y_j) = 0$ for all $j=1, \dots, r$.  
Now for any $x \in D$,  we can write $x = \sum_{j=1}^{r} c_j y_j$ for some $c_1, \dots , c_r\in \Fq$; 
hence $\pi_i(x) = \sum_{j=1}^{r} c_j \pi_i(y_j) = 0$.  Thus $i \notin \supp(D)$. 
This shows that $\supp(D) \subseteq \cup_{j=1}^n A_j $. 
\end{proof}

The next two lemmas extend Proposition \ref{prop:mindist} and show that for a judicious choice of a family $\{\M_1, \dots , \M_r\}$ of minors 
of $X$, the support weight of the product of any nonempty subfamily 
is given by a formula analogous to 
\eqref{eq:dlmr2}. 

\begin{lemma}
\label{closeminors}
Let $r$ be a positive integer such that $r\le \ell' - h +1$ and  
let $Y$ be any $h\times (h+r-1)$ submatrix of $X$.  
Also for $j=1, \dots , r$, let $\M_j$ denote the $h\times h$ minor of $Y$ (and hence of $X$) 
corresponding to the first $h-1$ columns of $Y$ together with the $(h+j-1)^{\rm th}$ column of $Y$,  
and let $A_j =\{P \in \mathbb{A}^{\delta} : \mathcal{M}_j(P) \neq 0 \}$. Then for any positive integer $s$ with $s\le r$ and any $j_1, \dots , j_s\in \{1,\dots , r\}$ with $j_1 < \cdots < j_s$, 
\begin{equation}
\label{Aint}
\left| A_{j_1} \cap \cdots \cap A_{j_s} \right|= d(\ell,m;h) \left(1 - \frac 1q\right)^{s-1}.
\end{equation}
\end{lemma}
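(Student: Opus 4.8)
The plan is to interpret the condition $P \in A_{j_1}\cap\cdots\cap A_{j_s}$ geometrically and reduce it to a linear-algebra count over $\Fq^h$. First I would observe that each minor $\M_j$ depends only on the restriction of $P$ to the $h$ rows of $Y$ and to the relevant columns. Writing $v_1,\dots,v_{h-1}\in\Fq^h$ for the restrictions to these $h$ rows of the common first $h-1$ columns of $Y$, and $u_j\in\Fq^h$ for the restriction of the $(h+j-1)^{\rm th}$ column, the condition $\M_j(P)\neq 0$ says precisely that $v_1,\dots,v_{h-1},u_j$ form a basis of $\Fq^h$. Hence $P\in A_{j_1}\cap\cdots\cap A_{j_s}$ if and only if $v_1,\dots,v_{h-1}$ are linearly independent and, setting $H:=\langle v_1,\dots,v_{h-1}\rangle$ (a hyperplane once the $v_i$ are independent), each of $u_{j_1},\dots,u_{j_s}$ lies outside $H$.

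Next I would carry out the count. Every entry of $P$ not lying, in the $h$ rows of $Y$, in the first $h-1$ columns or in the columns indexed by $j_1,\dots,j_s$ is unconstrained, and there are $\delta - h(h-1+s)$ such entries, contributing a factor $q^{\delta - h(h-1+s)}$. For the constrained entries, the crucial point --- and the reason the answer factorizes --- is that all $s$ minors share the same first $h-1$ columns, so they all refer to the single hyperplane $H$; once $v_1,\dots,v_{h-1}$ are fixed and linearly independent, the $s$ conditions $u_{j_t}\notin H$ become mutually independent. The number of ordered linearly independent $(h-1)$-tuples in $\Fq^h$ is $\prod_{i=0}^{h-2}(q^h-q^i)$, and for each such choice the number of vectors outside the resulting hyperplane $H$ is $q^h-q^{h-1}$, independently of $H$.

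Combining these, the constrained count is $\left[\prod_{i=0}^{h-2}(q^h-q^i)\right](q^h-q^{h-1})^s$. Absorbing one factor $(q^h-q^{h-1})$ into the product yields $\left|\GL_h(\Fq)\right| = \prod_{i=0}^{h-1}(q^h-q^i)$ times $(q^h-q^{h-1})^{s-1}$, and writing $q^h - q^{h-1} = q^{h-1}(q-1)$ I would collect the powers of $q$ to obtain $\left|A_{j_1}\cap\cdots\cap A_{j_s}\right| = q^{\delta-h^2}\left|\GL_h(\Fq)\right|\left(1-\frac1q\right)^{s-1}$. Since Proposition \ref{prop:mindist} gives $d(\ell,m;h)=q^{\delta-h^2}\left|\GL_h(\Fq)\right|$, this is exactly \eqref{Aint}.

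The bookkeeping of the exponents is routine; the only genuine content is the factorization observation in the second paragraph, namely that sharing the first $h-1$ columns forces all the independence conditions to concern one common hyperplane $H$. I expect that to be the main point to get right, since it is precisely what decouples the conditions and turns the count into a clean product. No appeal to the permutation automorphisms used in Proposition \ref{prop:mindist} seems necessary here, because the count depends only on the combinatorial pattern of shared versus varying columns and not on the actual position of $Y$ inside $X$.
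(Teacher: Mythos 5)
Your proof is correct and is essentially the same as the paper's: both reduce the condition $P \in A_{j_1}\cap\cdots\cap A_{j_s}$ to the linear independence of the restricted column vectors in $\Fq^h$, count the shared first $h-1$ columns in $\prod_{i=0}^{h-2}(q^h-q^i)$ ways and each of the $s$ distinguishing columns in $q^h-q^{h-1}$ ways, and let the remaining entries run free. The only cosmetic difference is that the paper splits the free entries into the $r-s$ unused columns of $Q$ and the entries outside $Q$, whereas you count them in one step.
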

 
\begin{proof}
Given any $\ell\times \ell'$ matrix $P\in  \mathbb{A}^{\delta} $ with entries in $\Fq$, let $Q$ denote the $h\times (h+r-1)$ submatrix of $P$ formed in exactly the same way as $Y$, and let 
$Q_1, \dots Q_{h+r-1}$ denote the column vectors of $Q$. 
For any positive integer $s$ with $s\le r$ and any $j_1, \dots , j_s\in \{1,\dots , r\}$ with $j_1 < \cdots < j_s$, the 
condition $P\in A_{j_1} \cap \cdots \cap A_{j_s}$ is equivalent to the condition that the column vectors  $Q_1, \dots , Q_{h-1}, Q_{h+j-1}$ in $\Fq^h$ are linearly independent for each $j\in \{j_1, \dots, j_s\}$. This will hold when the submatrix of $Q$ formed by its first $h-1$ columns 
is chosen 
in exactly $(q^h-1)(q^h-q)\cdots (q^h-q^{h-2})$ ways, while each of $Q_{h+j_1-1}, \dots , Q_{h+j_s-1}$ are chosen in $(q^h-q^{h-1})$ ways. The remaining $r-s$ columns of $Q$ may be chosen arbitrarily in $q^{h(r-s)}$ ways. Since $P$ has $\ell\ell' - h(h+r-1)$, i.e.,  $\delta - h^2 - h(r -1)$, entries outside $Q$, it follows that 
\begin{eqnarray*}
\left| A_{j_1} \cap \cdots \cap A_{j_s} \right| & = &(q^h-1)(q^h-q)\cdots (q^h-q^{h-2}) (q^h-q^{h-1})^s q^{\delta - h^2 - h(s -1)} \\
&=&  d(\ell,m;h) \left(1 - \frac 1q\right)^{s-1},
\end{eqnarray*}
where the last equality follows from \eqref{eq:dlmr2}.
\end{proof}

\begin{lemma}
\label{closeminors2}
Assume that $h < \ell'$. 
Let $r$ be a positive integer such that $r\le h +1$ 
and  let $Y$ be any $h\times (h+1)$ submatrix of $X$.  
Also for $j=1, \dots , r$, let $\M_j$ denote the 
determinant of the $h\times h$ submatrix of $Y$ formed by all except the  $(h-r+j+1)^{\rm th}$ column of $Y$,  
and let $A_j =\{P \in \mathbb{A}^{\delta} : \mathcal{M}_j(P) \neq 0 \}$. Then \eqref{Aint} holds for any positive integer $s$ with $s\le r$ and any $j_1, \dots , j_s\in \{1,\dots , r\}$ with $j_1 < \cdots < j_s$.
\end{lemma}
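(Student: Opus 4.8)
The plan is to follow the template of the proof of Lemma \ref{closeminors}, reducing the computation of $|A_{j_1} \cap \cdots \cap A_{j_s}|$ to a counting problem about the columns of the relevant $h \times (h+1)$ submatrix. Given $P \in \A^{\delta}$, let $Q$ be the $h \times (h+1)$ submatrix of $P$ occupying the same rows and columns as $Y$, and write $Q_1, \dots, Q_{h+1}$ for its columns, viewed as vectors in $\Fq^h$. Since $\M_j$ is the determinant of the square submatrix of $Y$ obtained by deleting column $c_j := h - r + j + 1$, the condition $\M_j(P) \ne 0$ is equivalent to the linear independence of the $h$ columns $\{Q_1, \dots, Q_{h+1}\} \setminus \{Q_{c_j}\}$. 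As $j$ runs over $1, \dots, r$, the index $c_j$ runs over the last $r$ columns $h-r+2, \dots, h+1$, and the hypothesis $h < \ell'$ guarantees that an $h \times (h+1)$ submatrix $Y$ exists inside $X$. Writing $N$ for the number of admissible $Q$, i.e.\ those for which deleting $Q_{c_{j_t}}$ leaves a basis of $\Fq^h$ for each $t = 1, \dots, s$, and noting that $P$ has $\delta - h(h+1)$ free entries outside $Q$, we get $|A_{j_1} \cap \cdots \cap A_{j_s}| = N \cdot q^{\delta - h(h+1)}$, so everything reduces to evaluating $N$.

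To compute $N$, I would separate the $h+1$ columns of $Q$ into the $h+1-s$ columns that are never deleted (call their collection $S_0$) and the $s$ deletable columns $Q_{c_{j_1}}, \dots, Q_{c_{j_s}}$. Since each admissibility condition requires a set containing all of $S_0$ to be a basis, the columns in $S_0$ must be linearly independent; there are $\prod_{i=0}^{h-s}(q^h - q^i)$ such choices. Fixing $S_0$ and passing to the quotient $V := \Fq^h / \langle S_0 \rangle \cong \Fq^{s-1}$, the $t$-th admissibility condition becomes exactly the requirement that the images in $V$ of the $s-1$ deletable columns other than $Q_{c_{j_t}}$ form a basis of $V$. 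Thus, if $w_1, \dots, w_s \in V$ denote the images of the deletable columns, the conditions say precisely that every $(s-1)$-subset of $\{w_1, \dots, w_s\}$ is a basis of the $(s-1)$-dimensional space $V$, and each such tuple lifts back to the deletable columns in $q^{s(h+1-s)}$ ways (a factor $q^{\dim \langle S_0 \rangle}$ per column).

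The crux is therefore to count the number $M$ of $s$-tuples $(w_1, \dots, w_s)$ in $\Fq^{s-1}$ every $(s-1)$ of which form a basis. I would argue as follows: the subtuple $w_1, \dots, w_{s-1}$ must be a basis, giving $|\GL_{s-1}(\Fq)|$ choices, and then writing $w_s = \sum_{u=1}^{s-1} a_u w_u$, the condition that replacing $w_t$ by $w_s$ still yields a basis is equivalent to $a_t \ne 0$; imposing this for every $t$ forces all coordinates $a_u$ to be nonzero, giving $(q-1)^{s-1}$ choices for $w_s$. Hence $M = |\GL_{s-1}(\Fq)| (q-1)^{s-1}$, and combining the counts yields $N = \left[\prod_{i=0}^{h-s}(q^h - q^i)\right] q^{s(h+1-s)} |\GL_{s-1}(\Fq)| (q-1)^{s-1}$.

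Finally I would substitute this into $|A_{j_1} \cap \cdots \cap A_{j_s}| = N \, q^{\delta - h(h+1)}$ and simplify, using $d(\ell,m;h) = q^{\delta - h^2} |\GL_h(\Fq)|$ from Proposition \ref{prop:mindist} together with the factorization $|\GL_h(\Fq)| = \left[\prod_{i=0}^{h-s}(q^h - q^i)\right]\prod_{i=h-s+1}^{h-1}(q^h - q^i)$; after cancelling the common factor $\prod_{i=0}^{h-s}(q^h - q^i)$ and the factors $(q^j-1)$, the identity reduces to a routine comparison of the exponents of $q$ on both sides, yielding the desired value $d(\ell,m;h)\left(1 - \tfrac 1q\right)^{s-1}$. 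I expect the main obstacle to be the combinatorial count $M$ — correctly recognizing that the ``every $(s-1)$-subset is a basis'' condition, once transported to the quotient $V$, becomes the clean nonvanishing-coordinates condition — rather than the final bookkeeping, which is mechanical once the quotient reduction is in place.
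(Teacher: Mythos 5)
Your argument is correct, and it rests on the same two observations as the paper's proof --- that $\M_{j}(P)\ne 0$ exactly when the $h$ columns of $Q$ other than the deleted one form a basis of $\Fq^h$, and that the simultaneous conditions amount to a nonvanishing-coefficient constraint --- but you organize the count differently. The paper singles out the first deleted column $Q_{c_{j_1}}$: it chooses the remaining $h$ columns to be linearly independent in $|\GL_h(\Fq)|$ ways and then writes $Q_{c_{j_1}}$ as a linear combination of them whose coefficients on the other $s-1$ deletable columns are nonzero, which gives $q^{h-s+1}(q-1)^{s-1}$ choices in a single step. You instead split the columns symmetrically into the $h+1-s$ never-deleted columns $S_0$ and the $s$ deletable ones, pass to the quotient $\Fq^h/\langle S_0\rangle\cong\Fq^{s-1}$, and reduce to counting $s$-tuples in $\Fq^{s-1}$ every $(s-1)$ of which is a basis --- a sub-count you then resolve by the very same nonzero-coefficient trick. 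Your route is more symmetric in $j_1,\dots,j_s$ and isolates a clean combinatorial lemma, at the price of an extra layer of lifting and a final exponent comparison that you leave as ``routine''; I have checked that
$\prod_{i=0}^{h-s}(q^h-q^i)\, q^{s(h+1-s)}\,|\GL_{s-1}(\Fq)|\,(q-1)^{s-1} = |\GL_h(\Fq)|\, q^{h-s+1}(q-1)^{s-1}$,
so the bookkeeping does close up and both computations yield $d(\ell,m;h)\left(1-\frac{1}{q}\right)^{s-1}$ as required.
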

 
\begin{proof}
Given $P\in \mathbb{A}^{\delta}$, let $Q$ be the $h\times (h+1)$ submatrix of $P$ corresponding to $Y$, and let $Q_1, \dots, Q_{h+1}$ denote its column vectors. Fix 
any positive integer $s$ with $s\le r$ and  $j_1, \dots , j_s\in \{1,\dots , r\}$ with $j_1 < \cdots < j_s$. Now $\M_{j_1}(P)\ne 0$ implies that $Q$ has rank $h$ and in particular, $Q_{h-r+j_1+1}$ is a $\Fq$-linear combination of the remaining $h$ column vectors of $Q$.  Moreover, for $2\le t \le s$, if $\M_{j_t}(P)\ne 0$, 
then the coefficients of $Q_{h-r+j_t+1}$ in this $\Fq$-linear combination must 
be nonzero. 
Conversely, if all except 
the $(h-r+j_1+1)^{\rm th}$ column of $Q$ are linearly independent (and these columns 
can thus be chosen in $\left|\GL_{h}(\Fq)\right|$ ways), while 
$Q_{h-r+j_1+1}$ is a $\Fq$-linear combination of the remaining $h$ column vectors of $Q$ with a nonzero coefficient for the $s-1$ columns $Q_{h-r+j_2+1}, \dots , Q_{h-r+j_s+1}$, then 
$\M_{j_t}(P)\ne 0$ for each $t=1, \dots , s$. The $h$ coefficients in this $\Fq$-linear combination can thus be chosen in $q^{h-s+1}(q-1)^{s-1}$ ways. Since $P$ has $\delta - h(h+1)$ entries outside of $Q$, it follows that 
$$
\left| A_{j_1} \cap \cdots \cap A_{j_s} \right|  =  \left|\GL_{h}(\Fq)\right|  q^{h-s+1} 
(q-1)^{s-1} q^{\delta - h(h+1)} =  d(\ell,m;h) \left(1 - \frac 1q\right)^{s-1},
$$
where the last equality follows once again from \eqref{eq:dlmr2}.
\end{proof}

\begin{theorem}
\label{mainthm1}
Let $r$ be a positive integer such that $r\le  \max\{\ell' - h, h\} +1$. Assume that $h<\ell'$ in case $\max\{\ell' - h, h\}  =h$, i.e., $\ell' \le 2h$. 
Then the $r^{{\rm th}}$ {higher weight}   $d_r(\ell,m;h)$ of $\Ch$ is
\begin{equation}\label{eq:dlmr}
d_r(\ell,m;h)  = q^{\delta-r+1}\frac{(q^r-1)}{(q-1)}\prod_{i=1}^{h}\left(1 - \frac{1}{q^i}\right) = q^{\delta - h^2-r+1} (q^r-1)\frac{\left|\GL_{h}(\Fq)\right|}{q-1} .
\end{equation}
Moreover, the $r^{{\rm th}}$ {higher weight}  of $\Ch$ attains the Griesmer-Wei bound. 
\end{theorem}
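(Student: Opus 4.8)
The plan is to pin down $d_r(\ell,m;h)$ by sandwiching it between a constructive upper bound and the general Griesmer--Wei lower bound, and then to observe that the two coincide; this simultaneously yields formula \eqref{eq:dlmr} and the assertion that it meets the Griesmer--Wei bound. The minimum distance $d_1(\ell,m;h)=d(\ell,m;h)$ is already in hand from Proposition~\ref{prop:mindist}, so the whole argument reduces to (i) producing an $r$-dimensional subcode whose support weight equals the right-hand side of \eqref{eq:dlmr}, and (ii) checking that the Griesmer--Wei sum $\sum_{i=0}^{r-1}\lceil d(\ell,m;h)/q^i\rceil$ collapses to exactly that same number under the stated bound on $r$.

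For the upper bound I would take the family $\M_1,\dots,\M_r$ of $h\times h$ minors furnished by Lemma~\ref{closeminors} when $\max\{\ell'-h,\,h\}=\ell'-h$ (so $r\le\ell'-h+1$), and by Lemma~\ref{closeminors2} when $\max\{\ell'-h,\,h\}=h$ (so $r\le h+1$, which is exactly where the hypothesis $h<\ell'$ enters). These minors are pairwise distinct, hence linearly independent in $\flmh$; since $\Ev$ is injective on $\flmh$, the subcode $D:=\Ev\bigl(\langle\M_1,\dots,\M_r\rangle\bigr)$ has dimension exactly $r$. By Lemma~\ref{union}, $\supp(D)=\bigcup_{j=1}^{r}A_j$ with $A_j=\{P:\M_j(P)\neq0\}$, so inclusion--exclusion together with \eqref{Aint}, whose value depends only on the size $s$ of the index set, gives
\[
\w(D)=d(\ell,m;h)\sum_{s=1}^{r}(-1)^{s-1}\binom{r}{s}\Bigl(1-\tfrac1q\Bigr)^{s-1}=d(\ell,m;h)\,\frac{1-q^{-r}}{1-q^{-1}},
\]
the last equality being the binomial evaluation $\sum_{s=1}^{r}(-1)^{s-1}\binom{r}{s}u^{s-1}=\bigl(1-(1-u)^r\bigr)/u$ with $u=1-q^{-1}$. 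Substituting $d(\ell,m;h)=q^{\delta}\prod_{i=1}^{h}(1-q^{-i})$ from \eqref{eq:dlmr2} turns this into precisely the right-hand side of \eqref{eq:dlmr}, so $d_r(\ell,m;h)$ is at most the claimed value.

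For the lower bound I would invoke the Griesmer--Wei inequality $d_r\ge\sum_{i=0}^{r-1}\lceil d_1/q^i\rceil$ and show that under the hypothesis on $r$ every ceiling is superfluous. Writing $d(\ell,m;h)=q^{e}N$ with $N=\prod_{i=1}^{h}(q^i-1)$ coprime to $q$ and $e=\delta-\tfrac12 h(h+1)$, the crux is that $e\ge r-1$: since $r-1\le\max\{\ell'-h,\,h\}$, it suffices to verify $\ell'-h\le\ell\ell'-\tfrac12 h(h+1)$, which reduces to $\ell'(\ell-1)\ge\tfrac12 h(h-1)$ and holds because $\ell'\ge h$ and $\ell-1\ge h-1$; and $h\le\ell\ell'-\tfrac12 h(h+1)$, which reduces to $\ell\ell'\ge\tfrac12 h(h+3)$ and holds because $h<\ell'$ forces $\ell'\ge h+1$, whence $\ell\ell'\ge h(h+1)\ge\tfrac12 h(h+3)$. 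Consequently $d_1/q^i=q^{e-i}N\in\mathbb{Z}$ for $0\le i\le r-1$, the Griesmer--Wei sum is the geometric series $N\sum_{i=0}^{r-1}q^{e-i}$, and this again equals $q^{\delta-r+1}\tfrac{q^r-1}{q-1}\prod_{i=1}^{h}(1-q^{-i})$. Matching the two bounds forces equality, and since the common value is the Griesmer--Wei bound itself, the final assertion follows at once. The routine parts are the inclusion--exclusion and the geometric summation; the real work, and the place where the exact range of $r$ and the mild hypothesis $h<\ell'$ are genuinely needed, is isolating the $q$-power $q^e$ in $d(\ell,m;h)$ and controlling $e$ against $r$ in both branches of the maximum so that all ceilings disappear. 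I expect this divisibility bookkeeping to be the main obstacle, with the linear independence of the $\Ev(\M_j)$ (hence $\dim D=r$) a secondary point to pin down carefully.
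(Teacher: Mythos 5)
Your proposal is correct and follows essentially the same route as the paper: the same minors from Lemmas \ref{closeminors} and \ref{closeminors2}, the same inclusion--exclusion evaluation of $\bigl|\bigcup_j A_j\bigr|$, and the same sandwich against the Griesmer--Wei bound. The one divergence is that your divisibility bookkeeping (isolating the $q$-power in $d(\ell,m;h)$ to remove the ceilings) is unnecessary: the paper simply uses $\lceil x\rceil\ge x$ so that the Griesmer--Wei sum is bounded below by $\sum_{i=0}^{r-1}d(\ell,m;h)/q^i$ without any integrality claim, and it cites \cite[Lemma 3]{BGH1} for the linear independence of the chosen minors rather than asserting it from their distinctness.
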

 
\begin{proof}
The hypotheses on $r$ and $h$ together with Lemmas \ref{closeminors} and \ref{closeminors2} ensure that there exist  minors 
$\M_1, \dots , \M_r \in \Delta(\ell,m;h)$ with supports $A_1, \dots , A_r$ respectively, such that 
\eqref{Aint} holds for any positive integer $s$ with $s\le r$ and any 
\mbox{$j_1, \dots , j_s\in \{1,\dots , r\}$} with $j_1 < \cdots < j_s$. Consequently, 
\begin{eqnarray*}
\left|\bigcup_{j=1}^r A_j \right|  & = & \sum_{s=1}^r (-1)^{s-1} \sum_{1\le j_1 < \cdots < j_s \le r} \left| A_{j_1} \cap \cdots \cap A_{j_s} \right|  \\ 
& = & \sum_{s=1}^r (-1)^{s-1} {{r}\choose{s}} d(\ell,m;h) \left(1 - \frac 1q\right)^{s-1} \\
& = & \frac{d(\ell,m;h) }{1 - q^{-1}} \left( 1 - \sum_{s=0}^r (-1)^{s} {{r}\choose{s}}  \left(1 -q^{-1}\right)^{s} \right) 
\\
& = & \frac{d(\ell,m;h) }{1 - q^{-1}} \left( 1 -  \left[ 1 - \left(1 -q^{-1}\right)\right]^{r} \right). 
\end{eqnarray*} 
Hence, in view of Lemma \ref{union} and Proposition \ref{prop:mindist}, we see that if $D_r$ is the subspace of $\flmh$ spanned by $\M_1, \dots , \M_r$, then 
\begin{equation}
\label{eq:wtDr}
 \w(D_r) = d(\ell,m;h) \frac{(q^r-1)}{q^{r-1}(q-1)} = q^{\delta-r+1}\frac{(q^r-1)}{(q-1)}\prod_{i=1}^{h}\left(1 - \frac{1}{q^i}\right).
\end{equation}
Moreover, by \cite[Lemma 3]{BGH1}, we see that $D_r$ is of dimension $r$. Thus,
\begin{equation}
\label{ineq:dlmr}
d_r(\ell,m;h)  := d_r\left(\Ch\right) \le d(\ell,m;h) \frac{(q^r-1)}{q^{r-1}(q-1)} = \sum_{i=0}^{r-1} \frac{ d(\ell,m;h)}{q^{i}}.
\end{equation}
On the other hand, the Griesmer-Wei bound (cf. \cite{W}) 
and Proposition \ref{prop:mindist} yields
\begin{equation}
\label{ineq:GW}
d_r\left(\Ch\right) \ge \sum_{i=0}^{r-1} \left\lceil \frac{ d(\ell,m;h)}{q^{i}} \right\rceil  \ge \sum_{i=0}^{r-1} \frac{ d(\ell,m;h)}{q^{i}}.
\end{equation}
Using \eqref{eq:wtDr}, \eqref{ineq:dlmr} and \eqref{ineq:GW},  we obtain the desired result.
\end{proof}

\begin{remark}
The only case in which the above theorem does not give any higher weights of $\Ch$ beyond the minimum distance is when $h=\ell = \ell'$. 
We believe that in this case even the second higher weight 
does not meet the Griesmer-Wei bound. In fact, it seems plausible that for $C^{\A}(\ell, 2\ell) = 
C^{\A}(\ell, 2\ell; \ell)$, 
$$
d_2\left(C^{\A}(\ell, 2\ell) \right) = q^{\ell}
\left( 1 + \frac{1}{q} - \frac{1}{q^{\ell}} \right)
\prod_{i=0}^{\ell-1} \left( q^{\ell} - q^i\right) 
= d(\ell, 2\ell; \ell) \left( 1 + \frac{q^{\ell-1} }{q^{\ell}-1} \right) .
$$
Note that  the expression on the right is strictly greater than $d(\ell, 2\ell; \ell) \left( 1 + q^{-1}\right)$. 
We remark also that the expression on the right is an upper bound for $d_2\left(C^{\A}(\ell, 2\ell) \right)$. This can be seen, for example, by considering the $2$-dimensional subspace $D$ of ${\EuScript{F}(\ell,2\ell)}$ spanned by the $\ell^{\rm th}$ leading principal minor $\M_1 = \det(X)$ and the $(\ell-1)^{\rm th}$ leading principal minor $\M_2$ of $X$, and using Lemma \ref{union} to show that 
\begin{eqnarray*}
|\supp (D) | & = &  |A_1| + |A_2| - |A_1\cap A_2| \\
&=& |\GL_{\ell}(\Fq)| + |\GL_{\ell-1}(\Fq)|q^{2\ell-1} - \GL_{\ell-1}(\Fq)|q^{\ell-1}(q^{\ell}- q^{\ell-1}) \\
&=& d(\ell, 2\ell; \ell) \left( 1 + \frac{q^{\ell-1} }{q^{\ell}-1} \right) ,
\end{eqnarray*}
where, as before, $A_j =\{P \in \mathbb{A}^{\delta} : \mathcal{M}_j(P) \neq 0 \}$ for $j=1,2$. 
\end{remark}

\section{Terminal Higher Weights}
\label{sec:terminal}

As in the case of Grassmann codes, determining some of the terminal higher weights is 
simpler than determining some of the initial higher weights. 
In fact, we go a little farther than what we could do with the initial higher weights. Thus, as opposed to finding explicitly the first $\mu'$ higher weights of say 
$\C$, where $\mu' = \max\{\ell, \ell'-\ell\}+1$, we are able find explicitly the last $\ell' + 1$ higher weights of not just $\C$, but any 
$\Ch$. 

The first step is a simple observation that holds, in fact, for any functional code defined by means of an evaluation map on a space of (polynomial) functions. However, we will just restrict to the case of affine Grassmann codes of a given level. 

\begin{lemma}
\label{ZW}
Let $D$ be a subcode of $\Ch$ and $t$ a positive integer such that there exist $t$ linearly independent polynomials $g_1, \dots , g_t\in \Fq[X]$ with the property that $\deg g_i \le 1 $ and $\Ev (g_i) \in D$ for all $i=1, \dots , t$. Then $\w(D) \ge q^{\delta} - q^{\delta - t}$. 
\end{lemma}
 
\begin{proof}
As noted in \cite[\S II]{BGH1}, the evaluation map $\Ev$ given by \eqref{EvonFlm} is injective. Hence $D$ is in bijection with 
$W:=\Ev^{-1}(D)=\{f\in \flmh: \Ev(f)\in D\}$. Moreover, if we let 
$Z(W) :=\{P\in \A^{\d} : f(P)=0 \text{ for all } P\in W\}$ denote the corresponding 
affine variety. then it is clear that 
$$
\w(D) = |\supp (D )| = |\A^{\d} \setminus Z(W) |  = q^{\d} - |Z(W)|.
$$
Now $g_1, \dots, g_t\in W$ and the number of common zeros in $\A^{\d}$ of $g_1, \dots , g_t$ corresponds to the number, say $N$, of solutions of a system of $t$ linearly independent nonhomogeneous linear equations in $\d$ variables with coefficients in $\Fq$. Hence $N=0$ or $N= q^{\delta - t}$ according as the system is inconsistent or consistent. Consequently, $|Z(W)|\le N \le q^{\delta - t}$
and so $\w(D) \ge q^{\delta} - q^{\delta - t}$. 
\end{proof}

Let $k_h$ denote the dimension of $\Ch$. 
We know from \cite[Prop. 2]{BGH2} that
\begin{equation}
\label{kh}
 k_h = \sum_{i=0}^h {{\ell}\choose{i}} \binom{\ell'}{i}  \quad \text{ and } \quad k_{\ell} = \binom{m}{\ell}.
\end{equation}
In the remainder of this paper, we fix an ordering $\N_1, \dots , \N_{k_h}$ on $\Delta(\ell,m;h)$ such that the $0\times 0$ minor appears at the end, preceded by the $1\times 1$ minors arranged lexicographically. 
More 
precisely, we require $\N_{k_h} = 1$ and $\N_{k_h-(i-1)\ell'-j} = X_{ij}$ for $i=1, \dots, \ell$ and $j=1, \dots , \ell'$. For instance, $\N_{k_h-1} = X_{11}$ and $\N_{k_h-\d} = X_{\ell \ell'}$. 
Note that $\N_1, \dots , \N_{k_h}$ gives an ordered $\Fq$-basis of $\flmh$, thanks to  \cite[Prop. 2]{BGH2}. 

\begin{lemma}
\label{BoundOndr}
$d_{k_h-r}(\Ch) \ge q^{\delta} - q^{r- 1}$ for 
all $r=1, \dots , \d +1$. 
\end{lemma}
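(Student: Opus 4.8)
The plan is to reduce the statement to Lemma \ref{ZW} by a single dimension count. Fix $r$ with $1 \le r \le \delta + 1$ and let $D$ be an arbitrary subcode of $\Ch$ with $\dim D = k_h - r$; it suffices to prove $\w(D) \ge q^{\delta} - q^{r-1}$, since taking the minimum over all such $D$ then yields the asserted lower bound on $d_{k_h - r}(\Ch)$. Since the evaluation map $\Ev$ of \eqref{EvonFlm} is injective (as recalled in the proof of Lemma \ref{ZW}), the preimage $W := \Ev^{-1}(D) \subseteq \flmh$ is a subspace with $\dim W = k_h - r$, and it is enough to exhibit sufficiently many linear polynomials inside $W$.

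The key observation is that $\flmh$ contains a large subspace of polynomials of degree $\le 1$. Indeed, since $h \ge 1$, the set $\Delta(\ell,m;h)$ contains the constant $1$ together with all $\ell\ell' = \delta$ variables $X_{ij}$; these are precisely the last $\delta + 1$ elements $\N_{k_h}, \N_{k_h - 1}, \dots , \N_{k_h - \delta}$ of the fixed ordered basis introduced after \eqref{kh}. Hence they span a subspace $U \subseteq \flmh$ of dimension $\delta + 1$ consisting of polynomials of degree $\le 1$. Applying the Grassmann dimension formula inside $\flmh$ (which has dimension $k_h$), I obtain
\begin{equation*}
\dim(W \cap U) \ge \dim W + \dim U - \dim \flmh = (k_h - r) + (\delta + 1) - k_h = \delta + 1 - r .
\end{equation*}

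Set $t := \delta + 1 - r$. If $t = 0$, i.e. $r = \delta + 1$, the asserted bound reads $\w(D) \ge q^{\delta} - q^{\delta} = 0$ and holds trivially. If $t \ge 1$, then $W \cap U$ contains $t$ linearly independent polynomials $g_1, \dots , g_t$, each of degree $\le 1$ and each satisfying $\Ev(g_i) \in D$. Lemma \ref{ZW} then applies and gives
\begin{equation*}
\w(D) \ge q^{\delta} - q^{\delta - t} = q^{\delta} - q^{\delta - (\delta + 1 - r)} = q^{\delta} - q^{r - 1},
\end{equation*}
which is exactly what was needed. I do not expect a serious obstacle here: the argument is essentially the Grassmann inequality feeding into Lemma \ref{ZW}. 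The only points requiring care are the verification that $U$ genuinely has dimension $\delta + 1$ — that is, that the constant $1$ and all $\delta$ linear forms $X_{ij}$ lie in $\flmh$ and are linearly independent, which follows from $h \ge 1$ and the basis property recorded after \eqref{kh} — and the separate treatment of the degenerate endpoint $r = \delta + 1$, where Lemma \ref{ZW} would only be available with $t = 0$.
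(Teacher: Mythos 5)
Your proof is correct and follows essentially the same route as the paper: both arguments reduce the bound to Lemma \ref{ZW} by producing $t=\delta+1-r$ linearly independent polynomials of degree $\le 1$ inside $W=\Ev^{-1}(D)$, using the fact that the constant $1$ and the variables $X_{ij}$ are the last $\delta+1$ elements of the ordered basis $\N_1,\dots,\N_{k_h}$. The only difference is cosmetic: where the paper establishes the existence of these $t$ polynomials by putting the coefficient matrix of a basis of $W$ into reduced row-echelon form and tracking the pivot columns, you obtain the same conclusion more directly from the dimension formula $\dim(W\cap U)\ge \dim W+\dim U-k_h$, which is a cleaner packaging of the same linear algebra.
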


\begin{proof}
{Fix} a positive integer $r$ with $r\le \d +1$ and let $s := k_h-r$. Let $D$ be any 
$s$-dimensional subcode of $\Ch$. Then $W:=\Ev^{-1}(D)$ is a linear subspace of $\flmh$ with $\dim W =s$. Suppose $f_1, \dots , f_s$ give an ordered $\Fq$-basis of $W$. Then there is a unique $s\times k_h $ matrix $\Lambda = \left( \lambda_{ij}\right)$ of rank $s$ 
with entries in $\Fq$ such that $\underline{\mathbf{f}} = \Lambda \underline{\mathbf{\N}}$, where 
$\underline{\mathbf{f}} := [f_1, \dots , f_s]^T$ and 
$\underline{\mathbf{\N}} := \left[\N_1, \dots , \N_{k_h}\right]^T$ 
denote the column vectors (over $\Fq[X])$ corresponding to the  abovementioned ordered 
$\Fq$-bases of $W$ and $\flmh$ respectively. 
Now let $\Lambda^* = \left( \lambda_{ij}^*\right)$ be the reduced row-echelon form of $\Lambda$ and let $p_1, \dots , p_s$ with $1\le p_1< \cdots < p_s\le k_h$ be the column indices in which the pivots occur, so that $\lambda_{ip_i}^*=1$ and $\lambda_{ij}^* =0$ if $j<p_i$ and also $\lambda_{kp_i}^* =0$ if $k\ne i$. It is clear that 
$p_s\ge s \ge k_h - \delta -1$.  Hence if we let $t:=\delta+1-r$, then 
in each of the last $t$ rows of $\Lambda^*$, the first $k_h - \delta -1$ entries are zero.  
Consequently,  the last $t$ 
rows of the product 
$ \Lambda^*\underline{\mathbf{\N}}$ are $\Fq$-linear combinations of the last $\delta+1$ minors among $\N_1, \dots , \N_{k_h}$,  and these give rise to  linearly independent polynomials $g_1, \dots , g_t \in \Fq[X]$ of degree $\le 1$.  Moreover, $\Lambda^*$ is obtained from $\Lambda$ by a finite sequence of elementary row operations, and hence there is a nonsingular $s\times s$ matrix $P$ with entries in $\Fq$ such that $\Lambda^* = P\Lambda$. In particular, $g_1, \dots , g_t$ correspond to the last $t$ rows of the product $P\underline{\mathbf{f}}$, and hence they are in $W$. Now Lemma \ref{ZW} implies that 
$\w(D) \ge q^{\delta} - q^{\delta - t} = q^{\delta} - q^{r- 1}$. 
Since $D$ was an arbitrary $s$-dimensional subscode of $\Ch$, we obtain the desired result. 
\end{proof}

It can be shown that the lower bound in Lemma \ref{BoundOndr} is attained when $r\le \ell'+1$, and this leads to the following result about the terminal higher weights. 

\begin{theorem}
\label{ValueOfdr}
$d_{k_h}(\Ch) = q^{\delta} $ and 
$d_{k_h-r}(\Ch) = q^{\delta} - q^{r- 1}$ for any positive integer $r$ with $r\le \ell' +1$. 
\end{theorem}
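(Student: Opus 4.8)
The plan is to complement the lower bound of Lemma \ref{BoundOndr} with a matching upper bound: for each $r$ with $1\le r\le \ell'+1$ I would exhibit a subcode of $\Ch$ of dimension $k_h-r$ and support weight exactly $q^{\delta}-q^{r-1}$. Since $\Ev$ is injective, it suffices to produce a subspace $W\subseteq \flmh$ with $\dim W = k_h-r$ whose common zero set $Z(W):=\{P\in\A^{\delta}: f(P)=0 \text{ for all } f\in W\}$ satisfies $|Z(W)|=q^{r-1}$; then $D:=\Ev(W)$ has dimension $k_h-r$ and, exactly as in the proof of Lemma \ref{ZW}, $\w(D)=q^{\delta}-|Z(W)|=q^{\delta}-q^{r-1}$. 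Comparing with Lemma \ref{BoundOndr} forces equality and gives the claimed value of $d_{k_h-r}(\Ch)$.

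To build such a $W$ I would fix an affine subspace $L\subseteq \A^{\delta}$ of dimension $r-1$ and set $W=\{f\in\flmh : f|_L\equiv 0\}$. The constraint $r-1\le\ell'$ is precisely what allows $L$ to be chosen so that every minor restricts to an affine-linear function on $L$: take $L$ to be the affine subspace obtained by leaving the $r-1$ entries $X_{11},\dots,X_{1,r-1}$ of the first row free and specializing every other entry of $X$ to $0$. Because each term in the expansion of any minor involves at most one entry from the first row, the restriction to $L$ of any minor in $\Delta(\ell,m;h)$ is a polynomial of degree $\le 1$ in the free parameters $X_{11},\dots,X_{1,r-1}$. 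This is the key observation, and the step I expect to be the main obstacle to pin down cleanly—everything else is linear algebra. It also explains transparently why the method stops exactly at $r=\ell'+1$: a row has only $\ell'$ entries, so $r-1$ free parameters can be confined to a single row only when $r-1\le\ell'$.

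Consequently the restriction map $\rho_L:\flmh\to\operatorname{Fun}(L,\Fq)$, $f\mapsto f|_L$, has image contained in the space of affine-linear functions on $L$, which has dimension $r$. Conversely, the constant $1$ and the $1\times 1$ minors $X_{11},\dots,X_{1,r-1}$ all lie in $\flmh$ and restrict to the constant function and the $r-1$ coordinate functions of $L$; these are linearly independent, so $\rho_L$ is onto the affine-linear functions on $L$ and $\operatorname{rank}\rho_L = r$. By rank-nullity, $\dim W = \dim\ker\rho_L = k_h-r$. Finally, since $\flmh$ contains every affine-linear form (all $X_{ij}$ and the constant $1$), it contains the $\delta-r+1$ independent affine-linear forms whose common zero locus is $L$; these forms lie in $W$, so $Z(W)\subseteq L$, and $L\subseteq Z(W)$ is clear. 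Hence $Z(W)=L$ and $|Z(W)|=q^{r-1}$, completing the upper bound.

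For the remaining assertion $d_{k_h}(\Ch)=q^{\delta}$, I would simply observe that $\Ch$ is nondegenerate: the constant $1\in\flmh$ evaluates to the all-ones vector, so no coordinate vanishes identically on the code. Therefore $\supp(\Ch)=\{1,\dots,q^{\delta}\}$, whence $\w(\Ch)=q^{\delta}$ and $d_{k_h}(\Ch)=q^{\delta}$, as desired.
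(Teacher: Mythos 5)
Your proposal is correct and follows essentially the same route as the paper: both complement Lemma \ref{BoundOndr} by constructing a codimension-$r$ subspace of $\flmh$ whose zero locus is exactly the $(r-1)$-dimensional coordinate subspace obtained by freeing $X_{11},\dots,X_{1,r-1}$ and vanishing all other entries. The paper simply takes $W$ to be the span of all minors in $\Delta(\ell,m;h)$ other than $1$ and $X_{11},\dots,X_{1,r-1}$ (which coincides with your kernel $\ker\rho_L$), reading off $\dim W=k_h-r$ from the basis of minors instead of via rank--nullity.
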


\begin{proof}
Since $1\in \flmh$, we see that $d_{k_h}(\Ch) = q^{\delta} $ or in other words, the code $\Ch$ is nondegenerate. 
Fix a positive integer 
$r\le \d +1$ and let $s := k_h-r$. Consider the 
linear subspace $W$ of $\flmh$ spanned by $\{ \N_1, \dots , \N_s\}$, i.e., by all 
the minors in $\Delta (\ell, m;h)$, except $1$ and $X_{1j}$ for $j=1,  \dots , r-1$. 
Observe that the corresponding affine variety $Z(W)$ consists precisely of the $\ell \times \ell'$ matrices  
$P=(P_{ij})\in \A^{\d}$ satisfying  $P_{ij}=0$ for all $i=1, \dots , \ell$ and 
$j=1, \dots , \ell'$, except when $(i, j)=(1, 1), \dots , (1, r-1)$. 
Consequently, $|Z(W)| = q^{r-1}$ and thus if we let $D:=\Ev(W)$ be the $s$-dimensional subcode of $\Ch$ corresponding to $W$, then $\w(D) =  q^{\delta} - q^{r- 1}$. This together with Lemma \ref{BoundOndr} yields the desired result. 
\end{proof}

\section{Higher Weights of Duals of Affine Grassmann Codes}
\label{sec:duals}

%

{For} determining the higher weights of duals of affine Grassmann codes, we use a simple, but powerful, method based on the following key result of Wei \cite{W}.

\begin{proposition} 
\label{Wei}
Let $C$ be a $[n,k]_q$-code. Then 
\begin{enumerate}
\item[{\rm (i)}] \emph{(Monotonicity)} If $k>0$, then $1\le d_1(C)< d_2(C) < \dots < d_k(C)\le n$.
\item[{\rm (ii)}] \emph{(Duality)} The higher weights of $C$ and its dual are related by
$$
\left\{ d_s(C^\perp): s = 1, \dots , n-k\right\} = 
\left\{  1, \dots , n\right\} \setminus
\left\{ n+1 - d_r(C): r = 1, \dots , k\right\}.
$$
\end{enumerate}
\end{proposition}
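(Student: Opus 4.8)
The plan is to prove both parts simultaneously through the single unifying device of a ``support dimension'' function. For a subset $S \subseteq \{1, \dots, n\}$, write $C(S) := \{c \in C : \supp(c) \subseteq S\}$ for the subcode of codewords supported inside $S$, and set $K(j) := \max\{\dim C(S) : |S| = j\}$ for $0 \le j \le n$. The first thing I would record is the reformulation
$$
d_r(C) = \min\{|S| : \dim C(S) \ge r\}, \qquad r = 1, \dots, k,
$$
which holds because any $r$-dimensional subcode $D$ lies in $C(\supp(D))$ with $|\supp(D)| = \w(D)$, while conversely any $S$ with $\dim C(S) \ge r$ contains the support of some $r$-dimensional $D \subseteq C(S)$, so $\w(D) \le |S|$. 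Equivalently, $d_r(C)$ is the least $j$ with $K(j) \ge r$.

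Next I would establish the two elementary structural properties of $K$. Since $S \subseteq S'$ forces $C(S) \subseteq C(S')$, the function $K$ is nondecreasing; and since deleting one index $i$ from $S$ drops the dimension by at most $1$ (the projection $C(S) \to \Fq$ onto the $i$-th coordinate has kernel $C(S \setminus \{i\})$), we get $K(j) \le K(j-1) + 1$. As $K(0) = 0$ and $K(n) = k$, the function $K$ climbs from $0$ to $k$ in steps of size $0$ or $1$; hence exactly $k$ of the $n$ increments equal $1$, and by the reformulation above the set of these ``jump indices'' is precisely $\{d_1(C), \dots, d_k(C)\}$. In particular these $k$ values are distinct integers lying in $\{1, \dots, n\}$, which, together with the direct checks $d_1 \ge 1$ (as $K(0)=0$) and $d_k \le n$ (as $K(n)=k$), already yields the monotonicity statement (i).

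The heart of the matter, and the step I expect to be the main obstacle, is the dimension identity linking $C$ to $C^\perp$. Writing $\Fq^S$ for the coordinate subspace of vectors vanishing off $S$, one has $C(S) = C \cap \Fq^S$ and $(\Fq^S)^\perp = \Fq^{S^c}$, so that
$$
\dim C(S) = \dim C + \dim \Fq^S - \dim(C + \Fq^S) = k + |S| - \left(n - \dim\left(C^\perp \cap \Fq^{S^c}\right)\right),
$$
using $(C+\Fq^S)^\perp = C^\perp \cap \Fq^{S^c}$. This gives the clean relation $\dim C(S) = k + |S| - n + \dim C^\perp(S^c)$. Taking the maximum over all $S$ of a fixed size $j$, and noting that $S \mapsto S^c$ is a size-reversing bijection, this becomes $K(j) = k + j - n + K^\perp(n-j)$, where $K^\perp$ is the analogue of $K$ for $C^\perp$. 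Getting this orthogonality bookkeeping exactly right is the crux; everything surrounding it is formal.

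Finally I would extract the duality by differencing. The relation for $K$ gives
$$
K(j) - K(j-1) = 1 - \big(K^\perp(n-j+1) - K^\perp(n-j)\big),
$$
so $j$ is a jump index of $K$ if and only if $n+1-j$ is \emph{not} a jump index of $K^\perp$. Since the jump indices of $K$ form the set $\{d_r(C) : r = 1, \dots, k\}$ and those of $K^\perp$ form $\{d_s(C^\perp) : s = 1, \dots, n-k\}$, applying the involution $j \mapsto n+1-j$ shows that $\{n+1-d_r(C) : r = 1, \dots, k\}$ is exactly the complement in $\{1, \dots, n\}$ of $\{d_s(C^\perp) : s = 1, \dots, n-k\}$, which is the asserted identity (ii).
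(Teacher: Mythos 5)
The paper does not actually prove this proposition: it is quoted verbatim as a known result of Wei \cite{W} and used as a black box, so there is no internal argument to compare yours against. Judged on its own, your proof is correct and complete. The reformulation $d_r(C)=\min\{j: K(j)\ge r\}$ is right (both inequalities are checked), the two structural facts about $K$ (monotone, increments at most $1$ via the kernel of the $i$-th coordinate projection) are sound, and the identification of the jump set of $K$ with $\{d_1(C),\dots,d_k(C)\}$ correctly delivers part (i), including $d_1\ge 1$ and $d_k\le n$. The crux, as you say, is the identity $\dim\bigl(C\cap \Fq^S\bigr)=k+|S|-n+\dim\bigl(C^\perp\cap\Fq^{S^c}\bigr)$, which follows correctly from the modular law together with $(U+V)^\perp=U^\perp\cap V^\perp$ and $(\Fq^S)^\perp=\Fq^{S^c}$; maximizing over $|S|=j$ and differencing then gives that $j$ is a jump of $K$ exactly when $n+1-j$ is not a jump of $K^\perp$, which is precisely statement (ii). This is essentially the classical argument (the function $K$ is the dimension/length profile, and your duality relation for $K$ is the standard one), so while it is ``different from the paper'' only in the trivial sense that the paper offers no proof, it is a legitimate and economical self-contained derivation; the only stylistic remark is that you could note explicitly that the $k=0$ case of (ii) is covered since then $K$ has no jumps and every $j\in\{1,\dots,n\}$ is a jump of $K^\perp$.
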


It is convenient and computationally effective to rephrase the above result for nondegenerate linear codes of positive dimension as follows. We will in fact give two equivalent formulations, the first of which is better suited for the terminal weights while the second is better suited for the initial weights. 

\begin{corollary} 
\label{CorWei}
Let $C$ be a nondegenerate $[n,k]_q$-code with $k>0$.
Let $d_0 :=0$ and for $1\le r\le k$, let $d_r$ 
denote the $r^{th}$ higher weight of $C$. 
Also let 
\begin{equation}
\label{eandf}
e_j := d_j - j \quad \text{ and } \quad f_j := n - j - d_{k-j} \quad \text{ for } \ 0\le j \le k.
\end{equation}
Then the $e$-sequence and the $f$-sequence partition $\{0,1,\dots , n-k\}$; 
more precisely, 
\begin{equation}
\label{eandfIneq}
0 = e_0 \le e_1 \le \cdots \le e_k = n-k  \quad \text{ and } \quad 0=f_0 \le f_1 \le \cdots \le f_k = n-k .
\end{equation}
Moreover,  for $0\le s < n-k$, the last $s^{th}$ higher weight 
of the dual of $C$ is given by 
\begin{equation}
\label{DualTerminal}
d_{n-k - s}\left(C^\perp\right) = n-s - j \quad \text{ if } j 
\text{ is the unique integer $< k $  with } 
e_j \le s < e_{j+1}. 
\end{equation}
Equivalently,  for $0 <  s \le n-k$, the $s^{th}$ higher weight of the dual of $C$ 
is given by 
\begin{equation}
\label{DualInitial}
d_{s}\left(C^\perp\right) = s+ j + 1 \quad
\text{ if } j  
\text{ is the unique integer $< k $ with } 
 f_j < s \le  f_{j+1}. 
\end{equation}
\end{corollary}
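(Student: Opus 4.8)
The plan is to derive everything from Proposition \ref{Wei}, handling \eqref{eandfIneq} and the two weight formulas separately but with a single combinatorial device: the order-reversing bijection $w\mapsto n+1-w$ between the non-weights of $C$ and the higher weights of $C^\perp$.

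First I would record the monotone skeleton. Since $C$ is nondegenerate with $k>0$, part (i) of Proposition \ref{Wei} together with $d_k=n$ gives the strict chain $0=d_0<d_1<\cdots<d_k=n$. Hence $e_{j+1}-e_j=d_{j+1}-d_j-1\ge 0$ and $f_{j+1}-f_j=d_{k-j}-d_{k-j-1}-1\ge 0$, while the definitions \eqref{eandf} give $e_0=f_0=0$ and $e_k=f_k=n-k$; this is exactly \eqref{eandfIneq}. It is convenient to also note the reflection identity $f_j=(n-k)-e_{k-j}$, which says that the $f$-sequence consists of the partial sums of the gap sizes $d_i-d_{i-1}-1$ taken in the order reverse to the $e$-sequence. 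The monotonicity in \eqref{eandfIneq} is precisely what makes the index $j$ in \eqref{DualTerminal} and \eqref{DualInitial} unique: the half-open intervals $[e_j,e_{j+1})$ (respectively $(f_j,f_{j+1}]$) for $0\le j<k$, with the empty ones coming from zero gaps simply dropped, partition $\{0,1,\dots,n-k-1\}$ (respectively $\{1,\dots,n-k\}$), which is the operational content of the partition assertion.

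Next I would set up the enumeration. Let $N:=\{1,\dots,n\}\setminus\{d_1,\dots,d_k\}$ be the set of non-weights of $C$, so $|N|=n-k$, and write $N=\{w_1<\cdots<w_{n-k}\}$. The duality formula in Proposition \ref{Wei}(ii) identifies $\{d_s(C^\perp)\}$ with $\{n+1-w:w\in N\}$; since $s\mapsto d_s(C^\perp)$ is increasing while $w\mapsto n+1-w$ is decreasing, this pins down the bijection $d_s(C^\perp)=n+1-w_{\,n-k+1-s}$. The key counting observation is that $e_j$ equals the number of non-weights not exceeding $d_j$ (among $\{1,\dots,d_j\}$ exactly $j$ integers are weights), so the non-weights lying strictly between $d_j$ and $d_{j+1}$ are precisely $w_{e_j+1},\dots,w_{e_{j+1}}$, and for each such $w_i$ the number of weights below it equals $j$, i.e. $w_i=i+j$. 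Substituting $i=s+1$ yields \eqref{DualTerminal}, while substituting $i=n-k+1-s$ together with the reflection identity $f_j=(n-k)-e_{k-j}$ and the reindexing $j\mapsto k-1-j$ yields \eqref{DualInitial}; in particular the two displayed formulas are equivalent reformulations related by $s\leftrightarrow n-k-s$.

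The main obstacle I anticipate is bookkeeping rather than conceptual: keeping the strict versus non-strict inequalities consistent through the chain of substitutions, and in particular handling zero gaps (repeated values in the $e$- and $f$-sequences) so that the unique $j<k$ is genuinely unique and the off-by-one in $w_i=i+j$ comes out correctly at the boundary cases $s=0$ and $s=n-k$. The cleanest way to control this is to prove the single statement ``$e_j<i\le e_{j+1}$ implies $w_i=i+j$'' once, check it against $e_0=0$ and $e_k=n-k$, and then read off both \eqref{DualTerminal} and \eqref{DualInitial} from it by the two substitutions, rather than arguing each formula from scratch.
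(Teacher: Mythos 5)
Your proposal is correct and follows essentially the same route as the paper: both derive \eqref{eandfIneq} from the strict chain $0=d_0<d_1<\cdots<d_k=n$ given by Proposition \ref{Wei}(i) and nondegeneracy, and then obtain \eqref{DualTerminal} and \eqref{DualInitial} from the duality in Proposition \ref{Wei}(ii). The paper simply asserts that these ``readily imply'' the two formulas, whereas you supply the explicit bookkeeping (the enumeration of non-weights, the identity $w_i=i+j$ for $e_j<i\le e_{j+1}$, and the reflection $f_j=(n-k)-e_{k-j}$), all of which checks out.
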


\begin{proof} Note that $d_k = n$, since $C$ is nondegenerate. With this in view, part (i) of Proposition \ref{Wei} implies \eqref{eandfIneq}. Next, parts (i) and (ii) of Proposition \ref{Wei} together with \eqref{eandfIneq} readily imply  \eqref{DualTerminal} and  \eqref{DualInitial} .
\end{proof}

\begin{remark}
The above Corollary shows that the higher weights of the dual of a nondegenerate linear code $C$ of positive dimension $k$ take consecutive values in strings of length $d_{r+1} - d_r -1$ for $0\le r < k$, where $d_r$ denotes the $r^{th}$ higher weight of $C$.  Evidently, this phenomenon is prevalent if there are large gaps among the consecutive higher weights of $C$. In fact, a duality of sorts seems to prevail here: more the number of consecutive strings among the higher weights of a code, the less there are among the higher weights of its dual, and vice-versa. In this connection, it may useful to note the following result of Tsfasman and Vl\u{a}du\c{t} \cite[Cor. 3.5]{TV2}, which states that 
for $1\le r \le s \le k$, 
$$
d_s \ge d_r + \sum_{i=1}^{s-r} \left\lceil \frac{(q-1)d_r}{(q^r - 1) q} \right \rceil
\quad \text{ and in particular, } \quad d_{r+1} - d_r \ge \left\lceil \frac{(q-1)d_r}{(q^r - 1) q} \right \rceil. 
$$
Another special case of Corollary \ref{CorWei} worth noting is that $C^\perp$ is nondegenerate if and only if $d_1(C) > 1$. 
\end{remark}

We now turn to duals of affine Grassmann codes. 
Recall that we have fixed positive integers $h, \ell, \ell'$ with $h\le \ell \le \ell'$ and that the length of the corresponding affine Grassmann code $\Ch$ of level $h$ is given by $n:=q^{\d}$ and the dimension $k_h$ is given by  \eqref{kh}. To avoid trivialities we will further assume that 
$\ell'> 1$. 
Indeed, it is easy to describe what the affine Grassmann code $\Ch$ and its dual is in the trivial case  $\ell'=1$ (or another trivial case $h=0$ that we have ignored from the beginning)  and in fact, this has been done in the paragraph before Theorem 17 in \cite{BGH2}. 
Using the results of Sections \ref{sec2} and \ref{sec:terminal}, we obtain a more concrete version of Corollary \ref{CorWei}, which determines 
several initial and terminal higher weights of the $\Ch^\perp$. 
As a very special case, we also obtain an alternative and simpler proof of \cite[Theorem 17]{BGH2}.

\begin{theorem}
\label{dualdr}
Assume that 
$\ell'>1$. 
$1\le r \le n - k_h$, let $d_r^{\perp}$ denote the $r^{\rm th}$ higher weight of $\Ch^{\perp}$. Then
\begin{enumerate} 
\item[{\rm (i)}] 
The minimum distance 
of $\Ch^{\perp}$ is given by 
$$
d_1\left( \Ch^{\perp} \right) = \begin{cases}
3 & \text{ if $q>2$}, \\
4 & \text{ if $q=2$}.
\end{cases}
$$
More generally, upon letting $Q_j = q^j - j$ for $j\ge 0$, 
the $s^{th}$ higher weight of $\Ch^{\perp}$ for $1 \le s < q^{\ell'} - \ell'$ is given by 
$$
d_s\left( \Ch^{\perp} \right) = s+ j+1
$$
where $j$ is the unique positive integer $\le \ell'$ such that $Q_{j-1} \le s < Q_j$. 
\item[{\rm (ii)}]
With $d(\ell,m;h) $ as in \eqref{eq:dlmr2}, we have $d(\ell,m;h) \ge 2$ and 
$$
d_{n-k_h-s}\left( \Ch^{\perp} \right) = q^{\delta} - s \quad \text{ for } 0\le s \le d(\ell,m;h) -2.
$$
In particular, $\Ch^{\perp} $ is nondegenerate. Further if we assume 
that $h < \ell'$ or $\ell' > 2h$, and we let $G_0=H_0=0$ and   
$G_j:=\sum_{i=0}^{j-1} q^{-i}$ and 
$H_j:=d(\ell,m;h)G_{j} -j$ for any positive integer $j$, then 
the last $s^{th}$ higher weight of $\Ch^{\perp}$  for 
$H_1 \le s \le \max\{H_{\ell}, \, H_{\ell'-\ell} \}$ is given by 
$$
d_{n-k_h-s}\left( \Ch^{\perp} \right) = n- s - j,
$$
where $j$ is the unique positive integer $\le \max\{\ell, \ell' -\ell\}$ 
with $H_{j-1} \le s < H_j$.
\end{enumerate}
\end{theorem}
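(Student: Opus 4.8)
The plan is to obtain Theorem \ref{dualdr} as a direct application of Corollary \ref{CorWei} to the code $C=\Ch$, using the explicit initial and terminal higher weights of $\Ch$ computed in Sections \ref{sec2} and \ref{sec:terminal}. The whole point of Corollary \ref{CorWei} is that it converts higher weights of a nondegenerate code into higher weights of its dual via the $e$-sequence and $f$-sequence, so the work is entirely a matter of computing these two sequences in the ranges where we have explicit formulas for $d_r(\Ch)$ and $d_{k_h-r}(\Ch)$.

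For part (i), which concerns the \emph{initial} higher weights of $\Ch^\perp$, I would use the formulation \eqref{DualInitial}, which requires the $f$-sequence $f_j = n - j - d_{k_h - j}$. This is governed by the terminal higher weights of $\Ch$, and Theorem \ref{ValueOfdr} gives $d_{k_h - j}(\Ch) = q^\delta - q^{j-1}$ for $1 \le j \le \ell'+1$ (with $d_{k_h}=q^\delta$ handling $j=0$). Substituting, I expect $f_0 = 0$ and $f_j = q^j - j = Q_j$ for $0 \le j \le \ell'$, so that the condition $f_{j-1} < s \le f_j$ in \eqref{DualInitial} becomes exactly $Q_{j-1} \le s < Q_j$ after a shift of index (being careful that the $f$-sequence is weakly increasing and that one passes from strict-on-the-left/weak-on-the-right to the stated weak-left/strict-right by the integrality of the $Q_j$ and the fact that consecutive $Q_j$ differ by more than $1$). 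The formula $d_s(\Ch^\perp) = s + j + 1$ then follows verbatim. The claimed range $1 \le s < q^{\ell'} - \ell' = Q_{\ell'}$ is precisely where the terminal weights of Theorem \ref{ValueOfdr} remain valid. The minimum distance statement is the special case $s=1$: one checks that $Q_0 = 0 \le 1 < Q_1 = q-1$ forces $j=1$ when $q>2$, giving $d_1 = 3$, whereas for $q=2$ one has $Q_1 = 1$ so $s=1$ lands in the next interval with $j=2$, giving $d_1 = 4$. This recovers \cite[Theorem 17]{BGH2}.

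For part (ii), concerning the \emph{terminal} higher weights of $\Ch^\perp$, I would instead use \eqref{DualTerminal}, which requires the $e$-sequence $e_j = d_j - j$, governed by the initial higher weights of $\Ch$. For the small-$s$ statement $d_{n-k_h-s}(\Ch^\perp) = q^\delta - s$ with $0 \le s \le d(\ell,m;h)-2$, the relevant range of $s$ sits below $e_1 = d_1(\Ch) - 1 = d(\ell,m;h)-1$, so the unique $j$ with $e_j \le s < e_{j+1}$ is $j=0$, and \eqref{DualTerminal} gives $n - s - 0 = q^\delta - s$ directly; nondegeneracy of $\Ch^\perp$ is the $s=0$ case, equivalent to $d_1(\Ch) > 1$, which holds since $d(\ell,m;h) \ge 2$ (this last inequality I would verify from \eqref{eq:dlmr2}, using $\ell' \ge 2$). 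For the general terminal weights, I would feed the initial higher weights from Theorem \ref{mainthm1}, namely $d_j(\Ch) = d(\ell,m;h)\,G_j$ where $G_j = \sum_{i=0}^{j-1}q^{-i}$, valid for $0 \le j \le \max\{\ell, \ell'-\ell\}+1$ under the stated hypothesis ($h<\ell'$ or $\ell' > 2h$, which is exactly the hypothesis of Theorem \ref{mainthm1} guaranteeing the formula extends beyond the minimum distance). This yields $e_j = d(\ell,m;h)\,G_j - j = H_j$, so the condition $H_{j-1} \le s < H_j$ in the theorem is precisely $e_{j-1} \le s < e_j$, and \eqref{DualTerminal} gives $d_{n-k_h-s}(\Ch^\perp) = n - s - j$ after matching the index $j$ in \eqref{DualTerminal} (where the interval is $e_j \le s < e_{j+1}$) against the shifted index in the theorem statement.

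The main obstacle I anticipate is bookkeeping with the index shifts and the strict-versus-weak inequalities, rather than any genuine mathematical difficulty. Corollary \ref{CorWei} states the terminal dual weights with the half-open interval $e_j \le s < e_{j+1}$ but the theorem phrases part (ii) with $H_{j-1} \le s < H_j$, so I must confirm that the ``unique positive integer $j$'' in the theorem corresponds to the index ``$j+1$'' would play in the corollary (equivalently, that $e_{j-1} = H_{j-1}$), and similarly reconcile the off-by-one between the $f_{j-1} < s \le f_j$ of \eqref{DualInitial} and the $Q_{j-1} \le s < Q_j$ of part (i). I would also need to check the endpoints of each range carefully: that $d(\ell,m;h) - 2$ and $Q_{\ell'}$ mark exactly where the respective explicit formulas for the higher weights of $\Ch$ cease to apply, so that the two halves of the conclusion abut correctly and nothing is claimed outside the region where Theorems \ref{mainthm1} and \ref{ValueOfdr} are in force. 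All of this is routine once the substitutions are made, so the proof should be short.
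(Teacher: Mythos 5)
Your proposal follows exactly the paper's proof: apply Corollary \ref{CorWei} to $C=\Ch$, reading the $f$-sequence off Theorem \ref{ValueOfdr} for part (i) and the $e$-sequence off Proposition \ref{prop:mindist} and Theorem \ref{mainthm1} for part (ii), with the $d(\ell,m;h)\ge 2$ check done just as you indicate. One bookkeeping slip to correct: $f_j = n - j - d_{k_h-j} = q^{j-1}-j$, not $Q_j=q^j-j$, and with the correct value the condition $f_j < s \le f_{j+1}$ of \eqref{DualInitial} translates by integrality into $Q_{j-1}\le s< Q_j$ with the \emph{same} index $j$, yielding $d_s\left(\Ch^{\perp}\right)=s+j+1$ directly --- no reindexing to $f_{j-1}<s\le f_j$ and no (invalid) exchange of the strict and weak inequalities is needed.
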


\begin{proof}
Let $C = \Ch$, $n=q^{\d}$ and $k = k_h$,   and let $d_j, e_j, f_j$ be as in Corollary \ref{CorWei}. By Theorem \ref{ValueOfdr}, the code $C$ is nondegenerate and $d_{k-j} = n - q^{j-1}$ for $1\le j \le \ell'+1$. Consequently, the condition $f_j < s \le  f_{j+1}$ translates 
to $Q_{j-1} \le s < Q_j$, provided $1\le j \le \ell'$. Thus \eqref{DualInitial} implies the desired formula  in (i) for $d_s\left( \Ch^{\perp} \right) $. 
In the particular case when $s=1$, we have $Q_1=1< 2= Q_2$ or $0=Q_0 < 1 < Q_1$ according as $q=2$  or $q>2$, and this yields the formula for the minimum distance 
of $\Ch^{\perp}$.

Next,  $d_1 = d(\ell,m;h) = q^{\delta - \ell^2}|\GL{\ell}(\Fq)|$ and since 
$\ell'> 1$ we see that $q^{\delta - \ell^2} \ge 2$ when $\ell < \ell'$, whereas
$|\GL_{\ell}(\Fq)| \ge (q^2 -1) (q^2 -q) \ge 6$ when $\ell = \ell'$. Thus in any case $d(\ell,m;h) \ge 2$ and so \eqref{DualTerminal} implies the first assertion in (ii). 
Further, Theorem \ref{mainthm1} shows that $e_j = H_j$ for $1 \le j \le 1 + \max\{\ell, \ell' -\ell\}$. Thus \eqref{DualTerminal} implies the remaining assertion in (ii) as well.
\end{proof}

\begin{remark}
\label{rem:dualdr}
With the first 
higher weight of $\Ch^{\perp}$ given as in part~(i) 
of Theorem \ref{dualdr} above, we can also describe 
many of the initial higher weights $d^{\perp}_s : = d_s\left( \Ch^{\perp} \right) $ by the recursive formula 
$$ 
d^{\perp}_s = 
\begin{cases}
d_{s-1}^{\perp} + 2 & \text{if $d_{s-1}$ is a power of $q$}, \\
d_{s-1}^{\perp} + 1 & \text{otherwise},
\end{cases}
$$
provided $2 \le s \le  q^{\ell'} - \ell'$. Likewise,  the last higher weight of $\Ch^{\perp}$ is  $n=q^{\delta}$, and 
many terminal higher weights of 
$\Ch^{\perp}$ are given by the recursive formula 
$$
d^{\perp}_{n - k_h - s} = \begin{cases}
d_{n-k_h-s+1}^{\perp} - 2 & \text{if $d_{n-k_h-s+1} = n + 1 - d(\ell,m;h) G_j$ for some $j$}, \\
d_{n-k_h-s+1}^{\perp} - 1 & \text{otherwise}, \end{cases}
$$
provided $1 \le s \le \max\{ d(\ell,m;h)G_{\ell} - \ell , \;  d(\ell,m;h)G_{\ell'-\ell} -(\ell'-\ell)\}$ 
and it is assumed that $h < \ell'$ or $\ell' > 2h$.  
\end{remark}
 
Using the direct formula in Theorem \ref{dualdr} or the recursive formula in Remark~\ref{rem:dualdr}, we can easily write down several of the initial and terminal higher weights of 
 $\Ch^{\perp}$. 
Table \ref{table:nonlin} illustrates the first few higher weights $d^{\perp}_s$ 
of the dual of $\Ch$, where $h, \ell, \ell'$ are sufficiently large, say $\ell'> \ell \ge h \ge 27$.


\begin{table}[h] 
\caption{Dual Higher Weights of Affine Grassmann Codes }  
\centering 
\begin{tabular}{c | c c c c c c c c c c c }   
\hline\hline  
$q$ & 2 & 3 & 4 & 5 & 7 & 8 & 9 & 11 & 13 & 16 &17 \\ [0.3ex] 
\hline 
$d^{\perp}_1$ &   4 &   3 &    3 &   3 &    3 &   3 &     3 &      3 &     3 &    3 &   3 \\ [.3ex]  
$d^{\perp}_2$	&         6 &	5 &	4 &	4 &	4 &	4 &	4 &	4 &	4 &	4 &	4 \\ [.3ex]
$d^{\perp}_3$ & 	7 &	6 &	6 &	5 &	5 &	5 &	5 &	5 &	5 &	5 &	5 \\ [.3ex]
$d^{\perp}_4$ &	8 &	7 &	7 &	7 &	6 &	6 &	6 &	6 &	6 &	6 &	6 \\ [.3ex]
$d^{\perp}_5$ &	10 &	8 &	8 &	8 &	7 &	7 &	7 &	7 &	7 &	7 &	7 \\ [.3ex]
$d^{\perp}_6$ &	11 &	9 &	9 &	9 &	9 &	8 &	8 &	8 &	8 &	8 &	8 \\ [.3ex]
$d^{\perp}_7$	&       12 &	11 &	10 &	10 &	10 &	10 &	9 &	9 &	9 &	9 &	9 \\ [.3ex]
$d^{\perp}_8$ &	13 &	12 &	11 &	11 &	11 &	11 &	11 &	10 &	10 &	10 &	10 \\ [.3ex]
$d^{\perp}_9$ &	14 &	13 &	12 &	12 &	12 &	12 &	12 &	11 &	11 &	11 &	11 \\ [.3ex]
$d^{\perp}_{10}$ &	15 &	14 &	13 &	13 &	13 &	13 &	13 &	13 &	12 &	12 &	12 \\ [.3ex]
$d^{\perp}_{11}$ &      16 &	15 &	14 &	14 &	14 &	14 &	14 &	14 &	13 &	13 &	13 \\ [.3ex]
$d^{\perp}_{12}$ &	18 &	16 &	15 &	15 &	15 &	15 &	15 &	15 &	15 &	14 &	14 \\ [.3ex]
$d^{\perp}_{13}$ &	19 &	17 &	16 &	16 &	16 &	16 &	16 &	16 &	16 &	15 &	15 \\ [.3ex]
$d^{\perp}_{14}$ &	20 &	18  &	18 &	17 &	17 &	17 &	17 &	17 &	17 &	16  &	16 \\ [.3ex]
$d^{\perp}_{15}$ &	21 &	19 &	19 &	18 &	18 &	18 &	18 &	18 &	18 &	18 &	17 \\ [.3ex]
$d^{\perp}_{16}$ &	22 &	20 &	20 &	19 &	19 &	19 &	19 &	19 &	19 &	19 &	19 \\ [.3ex]
$d^{\perp}_{17}$ &	23 &	21 &	21 &	20 &	20 &	20 &	20 &	20 &	20 &	20 &	20 \\ [.3ex] 
$d^{\perp}_{18}$ &	24 &	22 &	22 &	21 &	21 &	21 &	21 &	21 &	21 &	21 &	21 \\ [.3ex]  
$d^{\perp}_{19}$ &	25 &	23 &	23 &	22 &	22 &	22 &	22 &	22 &	22 &	22 &	22 \\ [.3ex]
$d^{\perp}_{20}$ &	26 &	24 &	24 &	23 &	23 &	23 &	23 &	23 &	23 &	23 &	23 \\ [.3ex]
$d^{\perp}_{21}$ &	27 &	25 &	25 &	24 &	24 &	24 &	24 &	24 &	24 &	24 &	24 \\ [.3ex]
$d^{\perp}_{22}$ &	28 &	26 &	26 &	25 &	25 &	25 &	25 &	25 &	25 &	25 &	25 \\ [.3ex]
$d^{\perp}_{23}$ &	29 &	27 &	27 &	27 &	26 &	26 &	26 &	26 &	26 &	26 &	26 \\ [.3ex]
$d^{\perp}_{24}$ &	30 &	29 &	28 &	28 &	27 &	27 &	27 &	27 &	27 &	27 &	27 \\ [.3ex]
$d^{\perp}_{25}$ &	31 &	30 &	29 &	29 &	28 &	28 &	28 &	28 &	28 &	28 &	28 \\ [.3ex] 
$d^{\perp}_{26}$ &	32 &	31 &	30 &	30 &	29 &	29 &	29 &	29 &	29 &	29 &	29 \\[.3 ex]
$d^{\perp}_{27}$ &	34 &	32 &	31 &	31 &	30 &	30 &	30 &	30 &	30 &	30 &	30 \\ [.3 ex]
\hline  
\end{tabular} 
\label{table:nonlin} 
\end{table} 

\appendix
\renewcommand\thesection{\!\!}
\section{A Geometric Approach to Higher Weights}

Let $n,k$ be positive integers with $k\le n$. A nondegenerate $[n,k]_q$-projective system is simply a (multi)set $X$ of $n$ points in the projective space $\PP^{k-1}$ 
over the finite field $\Fq$. 
If we write $\PP^{k-1} = \PP(V)$, where $V$ is a $k$-dimensional vector space over $\Fq$ and fix some lifts, say $v_1, \dots , v_n$, of these $n$ points to $V$, then the associated nondegenerate linear code $C_X$ is the image of the evaluation map 
$$
\Ev: V^* \to \Fq^n \quad \text{ defined by } \quad \Ev (\phi) = \left( \phi(v_1), \dots , \phi (v_n) \right),
$$
where $V^*$ denotes the dual of $V$, i.e., the space of all linear maps from $V$ to $\Fq$.  It is shown in \cite{TV1,TV2} that the association $X \leadsto C_X$ is a one-to-one correspondence, modulo natural notions of equivalence, from the class of nondegenerate $[n,k]_q$-projective systems onto the class of nondegenerate $[n,k]_q$-codes. For $1\le r \le k$, the $r^{\rm th}$ higher weight of $C_X$ corresponds to maximal sections of $X$ by (projective) linear subspaces 
of $\PP^{k-1}$ of codimension $r$; more precisely,
$$
d_r(C_X) = n - \max\{ |X\cap \Pi| : \Pi \text{ linear subspace of $\PP^{k-1}$ with } \codim \Pi = r\}.
$$
For more on this, we refer to \cite{TV1,TV2}. 

Now let $\ell, m$ be positive integers with $\ell \le m$ and as before let $k={{m}\choose{\ell}}$, 
$\ell':= m-\ell$ and $\d = \ell \ell'$. Assume that $1< \ell \le \ell'$. Consider the Grassmannian $G_{\ell,m} = G_{\ell,m} (\Fq)$ of $\ell$-dimensional subspaces of $\Fq^m$. The Pl\"ucker embedding 
$$
G_{\ell,m} \hookrightarrow \PP^{{{m}\choose{\ell}}-1} = \PP(\wedge^{\ell}\Fq^m) \quad \text{ given by } \quad W = \langle w_1, \dots , w_{\ell}\rangle \mapsto [w_1\wedge \cdots \wedge w_{\ell}]
$$
is known to be nondegenerate and the corresponding nondegenerate linear code is the Grassmann code $C(\ell,m)$. The formula stated in the Introduction for the last few higher weights of $C(\ell,m)$ follows readily from the structure of linear subvarieties of $G_{\ell,m}$ or, in algebraic parlance, the structure of decomposable subspaces of exterior powers. Indeed, $G_{\ell,m}$ contains a linear subspace $\Pi$ of dimension $r-1$, provided $r \le \mu$, where  $\mu:=1+\max\{\ell,m-\ell\}$; see, for example, \cite[Cor. 7]{GPP} or \cite[Lemma 3.5]{GK}. 
This subspace $\Pi$ has codimension $(k-1)-(r-1) = k-r$ in $\PP^{k-1}$, and clearly, $|\Pi\cap G_{\ell,m}| = |\Pi| = |\PP^{r-1}| = (1+q+ \cdots + q^{r-1})$. 
Consequently, 
$$
d_{k-r}\left( C(\ell ,m)\right) = n -( 1+ q + \cdots + q^{r -1}) \quad \text{ for } 1\le r \le \mu.
$$
Suppose we fix an ordered basis $\{e_1, \dots , e_m\}$ of $\Fq^m$ and the corresponding basis $\{e_{\alpha} : \alpha\in I(\ell,m)\}$ of $\wedge^{\ell}\Fq^m$, where 
$$
I(\ell ,m)=\{ \a = (\a_1, \dots , \a_\ell ) : \in {\mathbb Z}^{\ell}  : 
1\le \a_1 < \dots < \a_\ell  \le m \} 
$$
and $ e_{\alpha} : = e_{\alpha_1}\wedge \cdots \wedge e_{\alpha_\ell}$ for $\a = (\a_1, \dots , \a_\ell ) \in I(\ell, m)$. The Pl\"ucker coordinates of an $\ell$-dimensional subspace $W \in G_{\ell,m}$ spanned by $\{ w_1, \dots , w_{\ell}\}$ are precisely $p = (p_{\a})_{\a\in I(\ell,m)}$, where $p_{\a}\in \Fq$ are 
determined by the relation 
$$
w_1\wedge \cdots \wedge w_{\ell} = \sum_{\a\in I(\ell,m)} p_{\a}e_{\a}.
$$ 
For $\a\in I(\ell,m)$, let $H_{\a}$ denote the hyperplane $\{p \in \PP^{k-1} : p_{\a} =0\}$ in 
$\PP^{k-1} $, and let $U_{\a}:=\{p \in \PP^{k-1} : p_{\a} \ne 0\}$ be the corresponding basic open set. 
It is a classical fact that  $U_{\a}\cap G_{\ell,m}$ 
is isomorphic to the affine space $\AA^{\d}$ of $\ell \times\ell'$ matrices over $\Fq$.
This correspondence is given explicitly by the Basic Cell Lemma of \cite{GL}. For the sake of definitiveness, consider $\theta:= (\ell'+1, \ell'+2, \dots , m)\in I(\ell,m)$. Then the Pl\"ucker embedding restricted to $U_{\theta}\cap G_{\ell,m}$ gives a nondegenerate embedding of $\AA^{\d}$ into $\PP^{k-1} $, and the linear code corresponding to this projective system is, in fact, the affine Grassmann code $C^{\AA}(\ell,m)$ of length $q^{\d}$. If $\Pi$ is a linear subspace of $\PP^{k-1} $ of dimension $r-1$, then $\Pi\cap H_{\theta}$ would be a linear subspace of dimension $r-2$ or $r-1$ according as $\Pi\not\subseteq H_{\theta}$ or $\Pi\subseteq H_{\theta}$. Consequently, 
$$
|\Pi \cap U_{\theta}| = |\Pi| - |\Pi\cap H_{\theta}| \le ( 1+ q + \cdots + q^{r -1}) - ( 1+ q + \cdots + q^{r -2}) = q^{r -1}.
$$
Consequently, $|\Pi \cap U_{\theta}\cap G_{\ell,m}| \le q^{r -1}$ and so 
$d_{k-r}(\C) \ge q^{\delta} - q^{r- 1}$ for all $r=1, \dots , k$. This proves a stronger version of Lemma \ref{BoundOndr} in the case $h=\ell$. Further, if $r\le \mu = \ell'+1$ and if $\Pi$ is a linear subspace of $\PP^{k-1} $ of codimension $k-r$ chosen in such a way that $\Pi\subseteq G_{\ell,m}$ and $\Pi\subseteq H_{\theta}$, then 
$|\Pi \cap U_{\theta}\cap G_{\ell,m}| = |\Pi\cap H_{\theta}| = q^{r -1}$ and so 
$d_{k-r}(\C) = q^{\delta} - q^{r- 1}$ for all $r=1, \dots , \mu$. Since $1<\ell \le \ell'$, choosing such a subspace $\Pi$ is possible for $1\le r\le \mu$; for example, we can take 
$\Pi =\{p \in \PP^{k-1} : p_{(1, 2, \dots , \ell-1, j)} =0 \text{ for } j=\ell, \ell+1, \dots , \ell + r-1\}$ to be the intersection of Pl\"ucker coordinate hyperplanes that are ``close'' to each other. Thus we obtain an alternative proof of Theorem \ref{ValueOfdr} when $h=\ell$. 
 
On the other hand, deriving the formulas that we have for initial higher weights of $\C$ from the corresponding results for the Grassmann code $C(\ell, m)$ is not so straightforward. To be sure, the optimal linear subspace in $G_{\ell,m}$ of large dimension (or small codimension) are obtained in \cite{GL} by considering close families in $I(\ell,m)$ and the corresponding linear subsbaces of $\PP^{k-1}$ given by the intersections of Pl\"ucker coordinate hyperplanes. Recall that 
$\Lambda \subseteq I(\ell,m)$ is said to be \emph{close} if any two distinct elements  of $\Lambda$ have $\ell-1$ coordinates in common. However, determining the maximum possible  cardinality of the intersection of the corresponding linear subspace $\Pi$ with $U_{\theta}\cap G_{\ell,m}$ is not easy. It may be tempting to consider  $\Lambda \subseteq I(\ell,m)$ not containing $\theta$ such that $\Lambda \cup \{\theta\}$ is close. But this doesn't work even when $\Lambda$ is singleton (which would correspond to looking at the minimum distance). In fact, it is better to keep the elements of $\Lambda$ as far away from $\theta$ as possible. Thus choosing a close family in $I(\ell, \ell')$ rather than $I(\ell,m)$ is helpful and this has, in fact, motivated the proofs of Lemma \ref{closeminors} and \ref{closeminors2}, which paved the way for Theorem \ref{mainthm1}.

\end{document}